\documentclass[a4paper,fleqn]{cas-sc}

\usepackage[numbers,sort&compress]{natbib}
\usepackage[T1]{fontenc}
\usepackage{amsmath,amssymb,mathtools,bbm,amsthm}
\usepackage{tikz}
\usepackage{float}
\usetikzlibrary{arrows.meta,positioning}
\numberwithin{equation}{section}

\theoremstyle{plain}
\newtheorem{thm}{Theorem}[section]
\newtheorem{prop}[thm]{Proposition}
\newtheorem{cor}[thm]{Corollary}

\theoremstyle{definition}
\newtheorem{defi}[thm]{Definition}
\theoremstyle{remark}
\newtheorem{rem}[thm]{Remark}

\newcommand{\Out}{\mathrm{Out}}

\newcommand{\lab}{\mathrm{lab}}

\newcommand{\tr}{\mathrm{tr}}
\newcommand{\Fib}{\mathrm{Fib}}
\newcommand{\run}{\mathrm{run}}
\newcommand{\Phase}{\mathrm{Phase}}
\newcommand{\holdsucc}{\mathrm{hsucc}}

\begin{document}
\let\WriteBookmarks\relax
\def\floatpagepagefraction{1}
\def\textpagefraction{.001}

\shorttitle{Destination-Labeled Self-Looping Systems with Dwell}
\shortauthors{Reda Belaiche}

\title[mode=title]{Destination-Labeled Self-Looping Systems with Dwell: Intrinsic Characterization, Realization Cost, and Recognition}

\author[1]{Reda Belaiche}[orcid=0000-0002-8741-0349]
\cormark[1]
\ead{reda.belaiche@u-pec.fr}
\credit{Conceptualization, Methodology, Formal analysis, Writing - original draft, Writing - review \& editing}

\affiliation[1]{organization={Department of Computer Science, University Institute of Technology of Cr\'eteil-Vitry, Paris-Est Cr\'eteil University},
                addressline={122 rue Paul Armangot},
                city={Vitry-sur-Seine},
                postcode={94400},
                country={France}}

\cortext[1]{Corresponding author}

\begin{abstract}
Many physical state-transition systems --- machinery wear cycles, human activity sequences, or physiological progressions --- are naturally modeled by per-state classifiers rather than by a single global sequence model. Such architectures require a control skeleton that enforces hard graph constraints and minimum residence times. We study that skeleton in the form of destination-labeled self-looping systems with dwell (DLSL systems).

Once dwell is imposed, visible states no longer suffice: two histories may end in the same visible state while differing in whether departure is already enabled. The structural question is therefore intrinsic: which deterministic transducers arise from DLSL phase expansion over a fixed visible graph? We show first that the phase-expanded realizations of DLSL systems are exactly the fiber-linear graph-respecting transducers. Second, under reachability and realizable-departure hypotheses, equivalent accessible fiber-linear transducers over the same visible graph are isomorphic, so the visible transduction determines the dwell vector and local decision maps uniquely. Third, enforcing dwell values $(d_i)$ requires exactly $\sum_i d_i$ control states in the deterministic graph-preserving setting. Recognition and reconstruction are polynomial-time, in $O(|Q||\Omega|)$ time. We also treat an edge-entry extension in which decisions may enter designated interior phases of successor fibers; the same path-fiber analysis yields the corresponding converse and recognition results.
\end{abstract}

\begin{keywords}
finite-state automata \sep deterministic transducers \sep structural characterization \sep dwell-time \sep descriptional complexity \sep reconstruction algorithms
\MSC[2020]{68Q45 \sep 68Q70 \sep 68Q19}
\end{keywords}

\maketitle

\section{Introduction and main structural results}

Many physical processes --- machinery wear cycles, human activity sequences, and physiological state progressions among them --- evolve through a discrete set of qualitatively distinct regimes. A natural modeling choice in such settings is to assign a separate classifier to each regime, trained on data generated within that regime, rather than to train a single global model on the full sequence. This per-state classifier structure requires a control skeleton specifying which regime changes are physically admissible and how long the system must remain in a regime before a departure is credible. The DLSL model is exactly that skeleton.

A global sequence model --- whether based on hidden-state inference, recurrent neural networks, or semi-Markov discrimination --- learns transition structure from data, either explicitly or implicitly. When the transition graph and minimum dwell constraints are physically known in advance, it is more natural to encode them as hard structural constraints rather than as regularities that the learning algorithm must rediscover. That yields a model that is easier to interpret and, in principle, more data-efficient. The present framework isolates the finite-state control structure underlying that design: hard graph constraints, minimum dwell, per-state decisions, and deterministic execution in a single symbolic object.

Once minimum dwell is imposed, visible states no longer suffice to describe the controller exactly. Two histories may end in the same visible state while differing in whether departure is already enabled. The standard resolution is to refine each visible state into a short internal chain that records the remaining forced-hold depth. This yields the familiar phase-expanded realization. The forward construction is straightforward. The main question of the paper is the converse one: \emph{which deterministic transducers arise in exactly this way over a fixed visible graph?}

The paper is organized around three questions. The first is constructive: how does one compile a DLSL system into a deterministic transducer, and what is the exact graph-preserving state cost? Sections~\ref{sec:realization_preliminaries} and~\ref{sec:exact_state_cost} answer this; the compilation is immediate, and the lower-bound argument shows that the cost is optimal. The second question is intrinsic: which deterministic transducers arise from this construction without phase coordinates being assumed in advance? Section~\ref{sec:intrinsic_characterization} answers this, establishing fiber-linearity as the complete characterization and rigidity as its structural consequence. The third question is algorithmic: can fiber-linearity be tested efficiently? Section~\ref{sec:recognition} answers yes, in $O(|Q||\Omega|)$ time, and reconstructs the underlying DLSL data. Section~\ref{sec:edge_entry_extension} extends the characterization to the edge-entry variant, in which decisions may enter designated interior phases of successor fibers. The visible-state trace formulation is collected in Appendix~\ref{app:runlength}.

\section{Symbolic model, running example, and normal form}\label{sec:model}

We now formalize the symbolic skeleton and fix the notation used throughout. The development is purely symbolic, but the intended motivation remains the same: each local map \(g_i\) may be viewed as the output of a per-state classifier, while the DLSL structure supplies the hard control constraints. No statistical assumptions are needed in the formal theory.

\subsection{Destination-labeled self-looping visible graphs}

Let \(S\) be a finite visible-state set and let \(E \subseteq S \times S\) be a directed graph of admissible visible moves. Let \(\Sigma\) be a finite alphabet and let \(\ell:S\to\Sigma\) be a visible-state labeling.

\begin{defi}[Destination-labeled graph]
A directed graph \((S,E)\) equipped with \(\ell:S\to\Sigma\) is \emph{destination-labeled} if every admissible edge \((i,j)\in E\) carries the label \(\ell(j)\). Equivalently, if we define
\[
\lab(i\to j):=\ell(j),
\]
then the label of an edge depends only on its destination.
\end{defi}

\begin{defi}[Self-looping]
A destination-labeled graph is \emph{self-looping} if every visible state has a self-loop:
\[
(i,i) \in E
\qquad \text{for all } i \in S.
\]
\end{defi}

\begin{defi}[Label-deterministic out-neighborhood]
A destination-labeled graph is \emph{label-deterministic} if for every visible state \(i\) and every label \(a\in\Sigma\),
\[
\left|\{j \in \Out(i): \ell(j)=a\}\right| \le 1.
\]
\end{defi}

Under label-determinism, a visible successor is uniquely identified by its destination label. We therefore define
\[
\Sigma_i := \{\ell(j): j \in \Out(i)\},
\]
together with the induced partial transition map
\[
\delta(i,a)=
\begin{cases}
j, & \text{if } j \in \Out(i) \text{ and } \ell(j)=a,\\
\text{undefined}, & \text{otherwise.}
\end{cases}
\]

\subsection{Running example: activity recognition with minimum dwell}

We use a small running example throughout to keep the notation concrete. Let
\[
S=\{\mathrm{rest},\mathrm{walk},\mathrm{run}\}.
\]
The visible graph allows
\[
\begin{aligned}
&\mathrm{rest}\to\mathrm{rest},\quad
\mathrm{rest}\to\mathrm{walk},\quad
\mathrm{walk}\to\mathrm{rest},\quad
\mathrm{walk}\to\mathrm{walk},\\
&\mathrm{walk}\to\mathrm{run},\quad
\mathrm{run}\to\mathrm{walk},\quad
\mathrm{run}\to\mathrm{run},
\end{aligned}
\]
and forbids direct \(\mathrm{rest}\to\mathrm{run}\) and \(\mathrm{run}\to\mathrm{rest}\) transitions. The dwell values are
\[
d_{\mathrm{rest}}=3,\qquad d_{\mathrm{walk}}=2,\qquad d_{\mathrm{run}}=2.
\]

One may view the visible states as symbolic activity regimes and the local maps \(g_i\) as regime-dependent decision rules. This interpretation motivates the model; no statistical assumptions enter the formal development.

The example is deliberately small. Its phase-expanded realization has \(3+2+2=7\) control states, so it is convenient for illustrating the constructions and converse results below.

It is useful to distinguish the visible DLSL skeleton from its exact deterministic realization. The former has one visible state per label and records only the destination-labeled transition structure; the latter refines each visible state into a phase chain in order to make dwell memory explicit.

\begin{figure}[pos=H]
\centering
\begin{tikzpicture}[
    >=Latex,
    node distance=28mm and 28mm,
    state/.style={draw, circle, minimum size=11mm, inner sep=0pt},
    lbl/.style={font=\small}
]

\node[state] (rest) {\textsf{rest}};
\node[state, right=of rest] (walk) {\textsf{walk}};
\node[state, right=of walk] (run) {\textsf{run}};

% self-loops
\draw[->, loop above] (rest) to node[above,lbl] {rest} ();
\draw[->, loop above] (walk) to node[above,lbl] {walk} ();
\draw[->, loop above] (run) to node[above,lbl] {run} ();

% visible transitions
\draw[->] (rest) -- node[above,lbl] {walk} (walk);
\draw[->, bend left=20] (walk) to node[below,lbl] {rest} (rest);

\draw[->] (walk) -- node[above,lbl] {run} (run);
\draw[->, bend left=20] (run) to node[below,lbl] {walk} (walk);

\end{tikzpicture}
\caption{Implicit-dwell DLSL skeleton for the running example. There is one visible state per label, and every transition entering a state carries that state's label. Minimum dwell is part of the symbolic semantics, not of the visible graph itself.}
\label{fig:running_example_implicit_dwell}
\end{figure}

\autoref{fig:running_example_implicit_dwell} shows the visible DLSL skeleton of the running example, before any phase expansion is introduced. There is exactly one visible state for each label, and the labeling is destination-based: every transition entering \(\textsf{rest}\), \(\textsf{walk}\), or \(\textsf{run}\) carries the corresponding destination label. The minimum dwell values \(d_{\mathrm{rest}}=3\), \(d_{\mathrm{walk}}=2\), and \(d_{\mathrm{run}}=2\) are not encoded by extra visible states in this figure. Instead, they are enforced by the operational semantics through the residual dwell phase.

This distinction is important. The visible DLSL skeleton records the admissible labeled moves between regimes, while the phase-expanded realization introduced later makes the dwell memory explicit in the control state. In particular, when some \(d_i>1\), the one-state-per-label skeleton is not by itself an exact deterministic realization of the dwell-constrained behavior.

\subsection{Symbolic DLSL systems with dwell}

Fix a finite symbolic input alphabet \(\Omega\), an initial visible state \(s_0 \in S\), and a dwell map
\[
d:S \to \mathbb N_{\ge 1}, \qquad i \mapsto d_i.
\]
For each visible state \(i\), let
\[
g_i : \Omega \to \Sigma_i
\]
be a local symbolic decision map.

\begin{defi}[Symbolic DLSL system with dwell]
A \emph{symbolic destination-labeled self-looping system with dwell} consists of the data
\[
\mathcal A = (S,E,\ell,s_0,d,(g_i)_{i\in S})
\]
such that \((S,E,\ell)\) is destination-labeled, self-looping, and label-deterministic. Its \emph{operational state} is a pair \((i,r)\), where \(i\in S\) is the visible state and
\[
r \in \{0,1,\dots,d_i-1\}
\]
is the residual dwell phase.
\end{defi}

The operational update on input \(u\in\Omega\) is
\begin{equation}
\label{eq:operational_update}
(i,r) \xrightarrow{u}
\begin{cases}
(i,r-1), & r>0,\\
(\delta(i,g_i(u)),\, d_{\delta(i,g_i(u))}-1), & r=0.
\end{cases}
\end{equation}
Thus, while \(r>0\), the machine is forced to remain in the current visible state. Once \(r=0\), it may apply the local decision map and move to the unique visible successor identified by the predicted destination label.

In the running example, the operational state \((\mathrm{walk},1)\) means: the current visible regime is \(\mathrm{walk}\), and one forced-hold step remains before any proposal to leave \(\mathrm{walk}\) can be acted upon. By contrast, \((\mathrm{walk},0)\) means that the next input may either keep the machine in \(\mathrm{walk}\) or send it to \(\mathrm{rest}\) or \(\mathrm{run}\), provided the corresponding destination label is proposed by \(g_{\mathrm{walk}}\).

For an input word \(w=u_1\cdots u_n \in \Omega^*\), let
\[
(y_0,r_0)=(s_0,d_{s_0}-1)
\]
and define recursively
\[
(y_t,r_t) = \delta_{\mathrm{op}}((y_{t-1},r_{t-1}),u_t), \qquad t=1,\dots,n.
\]
We write
\[
\tr_{\mathcal A}(w):=y_0 y_1 \cdots y_n \in S^{n+1}
\]
for the visible-state trace and
\[
\Out_{\mathcal A}(w):=y_1 \cdots y_n \in S^n
\]
for the post-initial output word.

\begin{defi}[Standing assumptions]
\label{def:standing_assumptions}
The symbolic DLSL system is:
\begin{enumerate}
    \item \emph{destination-complete} if, for every visible state \(i\) and every label \(a\in\Sigma_i\), there exists an input symbol \(u\in\Omega\) such that \(g_i(u)=a\);
    \item \emph{reachable} if every visible state under discussion is reachable from \(s_0\) by some input word.
\end{enumerate}
\end{defi}

Destination-completeness is only needed when we want exact visible-trace realizability statements and lower bounds. It is not needed for the basic compilation into a deterministic transducer.

Because the visible graph is self-looping and destination-labeled, every visible state automatically has its own label available as a hold action. Concretely, if \((i,i)\in E\), then the self-loop into \(i\) carries label \(\ell(i)\), so \(\ell(i)\in\Sigma_i\). We use this simple observation only implicitly later, when a decision phase needs to realize a visible self-loop without augmenting the visible graph.

\section{Realization preliminaries}\label{sec:realization_preliminaries}

\subsection{Phase expansion into a deterministic transducer}

The forward realization is standard: once the residual dwell phase is carried explicitly, one obtains a deterministic finite-state controller immediately. We record it only to fix notation and to make the state count explicit before turning to lower bounds and intrinsic characterization.

\begin{prop}[Phase expansion into a deterministic finite-state transducer]
\label{thm:compilation_transducer}
For every symbolic DLSL system with dwell, there exists a deterministic Mealy transducer
\[
T_{\mathcal A}=(Q,q_0,\Omega,S,\Delta,\omega)
\]
such that for every input word \(w \in \Omega^*\),
\[
\omega^*(q_0,w)=\Out_{\mathcal A}(w).
\]
Moreover, one may choose
\[
Q = \{(i,r): i \in S,\ 0 \le r \le d_i-1\},
\qquad
q_0=(s_0,d_{s_0}-1),
\]
so that
\[
|Q| = \sum_{i\in S} d_i.
\]
\end{prop}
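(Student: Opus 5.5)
The plan is to take the operational semantics \eqref{eq:operational_update} as the transducer itself: the residual dwell phase becomes part of the control state, and the output on each step is the visible component of the state being entered. We set $Q$ and $q_0$ as in the statement and define, for $(i,r)\in Q$ and $u\in\Omega$,
\[
\Delta((i,r),u)=
\begin{cases}
(i,r-1), & r>0,\\
\bigl(\delta(i,g_i(u)),\, d_{\delta(i,g_i(u))}-1\bigr), & r=0,
\end{cases}
\qquad
\omega((i,r),u)=\pi_S\bigl(\Delta((i,r),u)\bigr),
\]
where $\pi_S(j,r')=j$ denotes projection onto the visible component. We then check the following in order.

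\textbf{Step 1: $\Delta$ is a total function into $Q$.} If $r>0$, then $0\le r-1\le d_i-2$, so $(i,r-1)\in Q$. If $r=0$, then $g_i(u)\in\Sigma_i$ by the typing of the local map. Hence some $j\in\Out(i)$ has $\ell(j)=g_i(u)$, and label-determinism makes this $j$ unique, so $\delta(i,g_i(u))=j$ is defined. Since $d_j\ge 1$, we get $0\le d_j-1$, and therefore $(j,d_j-1)\in Q$. Likewise $q_0\in Q$ because $d_{s_0}\ge 1$. This is the only place where a hypothesis is really used, namely label-determinism together with $g_i:\Omega\to\Sigma_i$, so it is the step that requires care. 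Destination-completeness and reachability are not needed.

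\textbf{Step 2: the transducer reproduces the operational run.} We prove by induction on $n$ that for $w=u_1\cdots u_n$ the run of $T_{\mathcal A}$ from $q_0$ visits exactly $(y_0,r_0),\dots,(y_n,r_n)$. The base case holds because $q_0=(s_0,d_{s_0}-1)=(y_0,r_0)$. The inductive step holds because $\Delta$ coincides with $\delta_{\mathrm{op}}$ by definition.

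\textbf{Step 3: the outputs match.} The $t$-th emitted symbol is $\omega((y_{t-1},r_{t-1}),u_t)=\pi_S(y_t,r_t)=y_t$. Concatenating over $t$ gives $\omega^*(q_0,w)=y_1\cdots y_n=\Out_{\mathcal A}(w)$, with the empty word handled trivially.

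\textbf{Step 4: the state count.} The set $Q$ is the disjoint union over $i\in S$ of the fibers $\{i\}\times\{0,\dots,d_i-1\}$, and each fiber has exactly $d_i$ elements. Hence $|Q|=\sum_{i\in S}d_i$.
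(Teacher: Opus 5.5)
Your proposal is correct and follows essentially the same route as the paper: you use the same phase-expanded $\Delta$ (your $\omega=\pi_S\circ\Delta$ is exactly the paper's case-defined output map), check by induction on input length that the run agrees with the operational semantics, and count the fibers. Your Step~1 is slightly more careful than the paper. The paper only notes that label-determinism makes $\delta(i,g_i(u))$ unique, whereas you also show that it exists because $g_i(u)\in\Sigma_i$.
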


\begin{proof}
The construction is the phase expansion itself. Let
\[
Q = \{(i,r): i \in S,\ 0 \le r \le d_i-1\},
\qquad
q_0=(s_0,d_{s_0}-1),
\]
and define
\[
\Delta((i,r),u)=
\begin{cases}
(i,r-1), & r>0,\\
(\delta(i,g_i(u)),\, d_{\delta(i,g_i(u))}-1), & r=0,
\end{cases}
\]
together with
\[
\omega((i,r),u)=
\begin{cases}
i, & r>0,\\
\delta(i,g_i(u)), & r=0.
\end{cases}
\]
These formulas are exactly the operational semantics \eqref{eq:operational_update}, with the next visible state emitted after each input symbol. Hence, by a direct induction on the input length, the transducer state after any prefix coincides with the operational state of the symbolic system, and the emitted word is precisely \(\Out_{\mathcal A}(w)\). The determinism of the transducer follows from the fact that each \(g_i\) is a function and label-determinism makes \(\delta(i,g_i(u))\) unique. Finally,
\[
|Q|=\sum_{i\in S} d_i
\]
by construction. \qedhere
\end{proof}

\begin{figure}[pos=H]
\centering
\begin{tikzpicture}[
    >=Latex,
    node distance=13mm and 22mm,
    state/.style={draw, circle, minimum size=11mm, inner sep=0pt},
    lbl/.style={font=\small}
]

% --- rest fiber ---
\node[state] (r2) {$\textsf{rest}$};
\node[state, below=of r2] (r1) {$\textsf{rest}$};
\node[state, below=of r1] (r0) {$\textsf{rest}$};

% --- walk fiber ---
\node[state, right=34mm of r1] (w1) {$\textsf{walk}$};
\node[state, below=of w1] (w0) {$\textsf{walk}$};

% --- run fiber ---
\node[state, right=34mm of w1] (u1) {$\textsf{run}$};
\node[state, below=of u1] (u0) {$\textsf{run}$};

% phase annotations
\node[left=1mm of r2, lbl] {\scriptsize 2};
\node[left=1mm of r1, lbl] {\scriptsize 1};
\node[left=1mm of r0, lbl] {\scriptsize 0};

\node[left=1mm of w1, lbl] {\scriptsize 1};
\node[left=1mm of w0, lbl] {\scriptsize 0};

\node[left=1mm of u1, lbl] {\scriptsize 1};
\node[left=1mm of u0, lbl] {\scriptsize 0};

% intra-fiber dwell transitions
\draw[->] (r2) -- node[left,lbl] {rest} (r1);
\draw[->] (r1) -- node[left,lbl] {rest} (r0);

\draw[->] (w1) -- node[right,lbl] {walk} (w0);
\draw[->] (u1) -- node[right,lbl] {run} (u0);

% self-loops on decision states
\draw[->, loop below] (r0) to node[below,lbl] {rest} ();
\draw[->, loop below] (w0) to node[below,lbl] {walk} ();
\draw[->, loop below] (u0) to node[below,lbl] {run} ();

% inter-fiber transitions
\draw[->] (r0) -- node[above,lbl,xshift=-14pt,yshift=-22pt] {walk} (w1);
\draw[->, bend left=20] (w0) to node[below,lbl] {rest} (r2);

\draw[->] (w0) -- node[below,lbl,xshift=-20pt,yshift=-18pt] {run} (u1);
\draw[->] (u0) -- node[below,lbl,xshift=32pt,yshift=-2pt] {walk} (w1);

\end{tikzpicture}
\caption{A destination-labeled realization of the running example with visible dwell phases. Each fiber is associated with one visible label: every transition entering a \(\textsf{rest}\)-state is labeled \(\textsf{rest}\), every transition entering a \(\textsf{walk}\)-state is labeled \(\textsf{walk}\), and every transition entering a \(\textsf{run}\)-state is labeled \(\textsf{run}\). The phase-\(0\) states are the decision states; the higher-phase states are forced-hold states. This is the reset-to-top canonical form; the edge-entry variant, in which decisions may enter a designated interior state of the successor fiber rather than its top state, is described in Section~\ref{sec:edge_entry_extension} and illustrated in Figure~\ref{fig:edge_entry_reset}.}
\label{fig:running_example_dlsl}
\end{figure}

In \autoref{fig:running_example_dlsl}, the destination-labeled structure and the dwell refinement are visible at the same time. The automaton is organized into three fibers, corresponding to \(\textsf{rest}\), \(\textsf{walk}\), and \(\textsf{run}\). Within each fiber, all transitions are labeled by the fiber label, so the visible symbol is determined by the destination side of the move.

The phase index records the remaining forced-hold depth before a non-self departure becomes possible. For example, after entering the \(\textsf{walk}\) fiber, the controller first visits phase \(1\), and one further occurrence of \(\textsf{walk}\) is required before it reaches the decision state at phase \(0\). Only there can it either remain in \(\textsf{walk}\) or move to another fiber.

This is the shape that will later be characterized intrinsically: path-like fibers with one decision state and destination-labeled transitions between fibers.

The phase-expanded picture in \autoref{fig:running_example_dlsl} is enough for the rest of the main line of the paper. A complementary behavioral description in terms of visible-state traces and run-length constraints is still useful, but it is secondary to the structural classification and is therefore deferred to \autoref{app:runlength}.

\section{Exact graph-preserving state cost}\label{sec:exact_state_cost}

The phase-expanded realization uses extra control states to store the residual dwell phase. The next question is whether this is merely a convenient construction or the exact graph-preserving state cost of dwell.

\subsection{Graph-respecting transducers}

\begin{defi}[Graph-respecting labeled transducer]
\label{def:graph_respecting}
A \emph{graph-respecting labeled transducer} for the visible graph \((S,E)\) is a deterministic transducer
\[
T=(Q,q_0,\Omega,S,\Delta,\lambda),
\]
where \(\lambda:Q\to S\) is a surjective visible-state labeling such that for every control state \(q\in Q\) and every input symbol \(u\in\Omega\),
\[
\lambda(\Delta(q,u)) \in \Out(\lambda(q)).
\]
Its emitted symbol after reading one input is the visible label of the successor control state. We say that \(T\) is \emph{equivalent} to \(\mathcal A\) if for every input word \(w\), the emitted output word of \(T\) equals \(\Out_{\mathcal A}(w)\).
\end{defi}

\subsection{Why the visible graph alone is insufficient}

\begin{prop}[Visible states alone do not suffice in the presence of nontrivial dwell]
\label{thm:no_exact_simulation_without_augmentation}
Assume there exists a reachable visible state \(i\in S\) such that
\[
d_i > 1
\qquad \text{and} \qquad
\Out(i) \setminus \{i\} \ne \varnothing.
\]
Assume further that at least one non-self destination from \(i\) is realizable, namely that there exist \(j \in \Out(i)\setminus\{i\}\) and \(b \in \Omega\) such that
\[
g_i(b)=\ell(j).
\]
Then there is no deterministic graph-respecting labeled transducer equivalent to \(\mathcal A\) whose control-state set is exactly \(S\) and whose visible-state labeling is the identity.
\end{prop}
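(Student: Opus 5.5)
I will argue by contradiction. Suppose \(T=(S,q_0,\Omega,S,\Delta,\lambda)\) is a deterministic graph-respecting labeled transducer equivalent to \(\mathcal A\), with \(\lambda=\mathrm{id}_S\). Then the control state of \(T\) after reading any word \(w\) is its visible label. By Definition~\ref{def:graph_respecting}, the emitted symbol is the label of the successor state, and that label is the identity here. So the state of \(T\) after \(w\) equals the last letter of its output word; for \(w=\varepsilon\), equivalence forces \(q_0=s_0\). Equivalence with \(\mathcal A\) then gives a key invariant: after any input word \(w\) with \(|w|\ge 1\), the control state of \(T\) is the visible state \(y_{|w|}\) of the operational trace of \(\mathcal A\). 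Consequently the next output of \(T\) on input \(u\) is \(\Delta(y_{|w|},u)\), a function of the visible state alone.

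The plan is to exhibit two input words \(w_1,w_2\) that both lead \(\mathcal A\) to visible state \(i\), one with residual phase \(r>0\) and one with residual phase \(0\). Feeding the symbol \(b\) after each, \(\mathcal A\) emits \(i\) in the first case and \(\delta(i,g_i(b))=j\neq i\) in the second. But \(T\) is in the same control state \(i\) in both cases, so it emits \(\Delta(i,b)\) both times, which is a contradiction.

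To build the two words, I will use reachability of \(i\). Choose a word \(w\) whose trace ends in visible state \(i\). There are two cases.

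\begin{itemize}
\item If \(i=s_0\), take \(w=\varepsilon\), giving phase \(d_i-1>0\).
\item If \(i\neq s_0\), some step of the trace first enters \(i\) from a different state. Truncate \(w\) at the first step whose trace value is \(i\) and whose predecessor differs. By the update rule \eqref{eq:operational_update}, that step resets the phase to \(d_i-1\geq 1\).
\end{itemize}

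Call the resulting word \(w_1\); its operational state is \((i,d_i-1)\) with \(d_i-1>0\). Since \(T\) must agree with \(\mathcal A\) already at the empty word, the invariant also covers the case \(w_1=\varepsilon\): the state of \(T\) is then \(q_0=s_0=i\). Next, let \(w_2=w_1u^{d_i-1}\) for an arbitrary \(u\in\Omega\). While the phase is positive, \(\mathcal A\) is forced to hold, so the trace stays at \(i\) and the operational state becomes \((i,0)\).

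Now compare the two runs. Appending \(b\) to \(w_1\) makes \(\mathcal A\) emit \(i\), because the phase is positive. Appending \(b\) to \(w_2\) makes it emit \(\delta(i,g_i(b))=\delta(i,\ell(j))=j\). Here label-determinism ensures that \(j\) is the unique out-neighbour of \(i\) carrying label \(\ell(j)\), and \(j\ne i\).

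The only delicate point is the first-entry truncation, which guarantees a positive residual phase; the self-loop and the \(r>0\) branch of \eqref{eq:operational_update} make the rest routine. The hypothesis \(\Out(i)\setminus\{i\}\ne\varnothing\) is subsumed by the existence of \(j\), so no further case analysis is needed.
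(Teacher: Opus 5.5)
Your route is the same as the paper's: two histories end in visible state \(i\), one with positive residual phase and one with phase \(0\), and the symbol \(b\) separates them, while a transducer with \(Q=S\) and \(\lambda=\mathrm{id}\) would have to sit in control state \(i\) after both. Your first-entry truncation is in fact more accurate than the paper's ``by extending \(\alpha\)''. The step that fails is the claim that at \(w=\varepsilon\) equivalence forces \(q_0=s_0\). Equivalence in Definition~\ref{def:graph_respecting} compares only the post-initial words \(\Out_{\mathcal A}(w)=y_1\cdots y_n\), which are both empty when \(w=\varepsilon\). Nothing in that definition ties \(\lambda(q_0)\) to \(s_0\). So your invariant (the state of \(T\) after \(w\) is \(y_{|w|}\)) is justified only for \(|w|\ge 1\), and your case \(i=s_0\), \(w_1=\varepsilon\) is not covered.

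For \(i\ne s_0\) the first-entry word is nonempty and your proof is complete. For \(i=s_0\) with \(d_{s_0}\ge 3\) you can take \(w_1=u\) and \(w_2=u^{d_{s_0}-1}\), both nonempty, with phases \(d_{s_0}-2\ge 1\) and \(0\). But if \(i=s_0\), \(d_{s_0}=2\), and no nonempty word leads to \((s_0,1)\), the gap cannot be closed, because the statement itself fails. Take \(\Sigma=S=\{s_0,j,k\}\), \(\ell=\mathrm{id}\), \(E=\{(s_0,s_0),(s_0,j),(j,j),(k,k),(k,s_0)\}\), \(d_{s_0}=2\), \(d_j=d_k=1\), \(g_{s_0}\equiv j\), \(g_j\equiv j\), \(g_k\equiv s_0\). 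Then \(\Out_{\mathcal A}(u_1\cdots u_n)=s_0\,j^{n-1}\) for \(n\ge 1\). The transducer with \(Q=S\), \(\lambda=\mathrm{id}\), \(q_0=k\), \(\Delta(k,\cdot)=s_0\), \(\Delta(s_0,\cdot)=j\), \(\Delta(j,\cdot)=j\) is graph-respecting and equivalent to \(\mathcal A\), although every hypothesis holds with \(i=s_0\). The paper's proof has exactly the same hole: it asserts that \(T\) is in control state \(i\) after \(w'=\alpha c^{d_i-2}\), which in this situation is \(\varepsilon\). There are two repairs. You can build the convention \(\lambda(q_0)=s_0\) (here \(q_0=s_0\)) into the notion of equivalence, in which case your proof is complete as written. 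Alternatively, you can assume that some nonempty word leads to an operational state \((i,r)\) with \(r\ge 1\); you then run your argument from that word, appending \(u^{r}\) to reach phase \(0\).
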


\begin{proof}
The idea is to compare two histories that end at the same visible state \(i\) but leave different amounts of dwell budget.

Because \(i\) is reachable, there exists an input word \(\alpha\) that brings the symbolic DLSL system to visible state \(i\). By extending \(\alpha\), we may assume that the operational state after reading \(\alpha\) is \((i,d_i-1)\), namely the state immediately after entry into \(i\). Fix an arbitrary input symbol \(c \in \Omega\).

Consider the two words
\[
w = \alpha c^{d_i-1},
\qquad
w' = \alpha c^{d_i-2}.
\]
After reading \(w\), the operational state is \((i,0)\): the machine is still visibly in \(i\), but it is now allowed to leave. After reading \(w'\), the operational state is \((i,1)\): the machine is still visibly in \(i\), but one forced-hold step remains.

Now feed the same additional symbol \(b\). By assumption, when the machine is in visible state \(i\) and residual phase \(0\), the symbol \(b\) triggers a move to \(j \ne i\). Hence the visible output on the continuation \(b\) after history \(w\) begins with \(j\). By contrast, after history \(w'\), the machine is still under forced hold. Here one uses \eqref{eq:operational_update} explicitly: for every input symbol \(u\), if the residual phase is positive then the next operational state remains in the same visible state, namely \((i,r)\xrightarrow{u}(i,r-1)\). Therefore the same continuation \(b\) keeps the visible state at \(i\); the visible output begins with \(i\). The two continuations thus produce different future visible outputs.

Suppose now that an equivalent graph-respecting transducer existed with control-state set exactly \(S\) and visible-state labeling equal to the identity. After histories \(w\) and \(w'\), that transducer would be in the same control state \(i\). By determinism, feeding the same next symbol \(b\) would then produce the same successor visible state after both histories, contradicting the previous paragraph. Thus no such transducer exists. \qedhere
\end{proof}

In the running example, the same phenomenon appears inside the visible state \(\mathrm{walk}\): two histories may both end in \(\mathrm{walk}\), yet only the one that has already discharged its dwell budget can respond immediately to a proposal for \(\mathrm{run}\). That distinction is invisible on the visible graph alone.

\subsection{A lower bound above each visible state}

\begin{thm}[Lower bound on required augmentation]
\label{thm:lower_bound_augmentation}
Assume the symbolic DLSL system is destination-complete. Let
\[
T=(Q,q_0,\Omega,S,\Delta,\lambda)
\]
be any deterministic graph-respecting labeled transducer equivalent to \(\mathcal A\).

Fix a reachable visible state \(i\in S\) such that \(\Out(i)\setminus\{i\} \ne \varnothing\) and at least one non-self departure from \(i\) is realizable. Then
\[
|\lambda^{-1}(i)| \ge d_i.
\]
Consequently, if every visible state is reachable and admits a realizable non-self departure, then
\[
|Q| \ge \sum_{i\in S} d_i.
\]
\end{thm}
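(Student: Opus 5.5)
Following the proof of Proposition~\ref{thm:no_exact_simulation_without_augmentation}, we exhibit $d_i$ histories that all end in visible state $i$ yet must leave $T$ in $d_i$ pairwise distinct control states, all labeled $i$. Since $i$ is reachable, fix an input word $\alpha$ after which $\mathcal A$ is in operational state $(i,d_i-1)$. Such an $\alpha$ exists: reach $i$, then either use the history that just entered $i$, or leave and re-enter $i$. In the latter case destination-completeness supplies the needed inputs, and the extension argument already used in Proposition~\ref{thm:no_exact_simulation_without_augmentation} applies. Fix any $c\in\Omega$ and set $w_k=\alpha c^{k}$ for $k=0,\dots,d_i-1$. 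After $w_k$ the operational state is $(i,d_i-1-k)$ by \eqref{eq:operational_update}. Let $q_k=\Delta^*(q_0,w_k)$. The last emitted symbol after $w_k$ is $\lambda(q_k)$, and equivalence forces $\lambda(q_k)=i$ (for $k=0$ with $\alpha$ empty, use $q_0$ and the initial state $s_0=i$, or simply assume $\alpha$ nonempty). Thus all $q_k$ lie in $\lambda^{-1}(i)$.

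Next we show the $q_k$ are pairwise distinct. Take $k<k'$ and put $m=d_i-1-k'$. The state $(i,d_i-1-k)$ has residual phase $m':=d_i-1-k>m$. Choose $b$ with $g_i(b)=\ell(j)$ for some $j\ne i$; it exists by the realizable-departure hypothesis. Feed the continuation $v=c^{m}b$. From $(i,m)$, reading $c^m$ reaches $(i,0)$ with output $i^m$, and then $b$ outputs $j$. From $(i,m')$, reading $c^m$ reaches $(i,m'-m)$ with $m'-m\ge1$, so $b$ is a forced hold and outputs $i$. The outputs of $\mathcal A$ on $w_kv$ and $w_{k'}v$ therefore differ in the last letter of the continuation, while the prefixes $w_k$ and $w_{k'}$ are fixed. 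If $q_k=q_{k'}$, determinism of $T$ would make the continuation outputs from $q_k$ and $q_{k'}$ on $v$ identical. By equivalence the outputs of $T$ equal those of $\mathcal A$, a contradiction. Hence $|\lambda^{-1}(i)|\ge d_i$.

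The global bound follows because the fibers $\lambda^{-1}(i)$ partition $Q$, so $|Q|=\sum_i|\lambda^{-1}(i)|\ge\sum_i d_i$ once every $i$ satisfies the hypotheses.

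The main obstacle is the first step: securing a history that ends exactly at phase $d_i-1$ (fresh entry). Reachability gives only some state $(i,r)$. If $r<d_i-1$, we would pass to $(i,0)$, take a non-self departure to $j$, and later return to $i$. Returning is not automatic, since it needs a path back to $i$ in the graph. The simplest fix is to avoid fresh entry entirely: reach $(i,r)$ with $r$ arbitrary and run the argument above with the phases $r,r-1,\dots,0$. That alone gives only $r+1$ states, so we still need histories at higher phases. We would therefore first try to sharpen the definition of reachability: reaching $i$ at time $t$ means $y_t=i$ with $y_{t-1}\ne i$ or $t=0$, and by \eqref{eq:operational_update} this forces phase $d_i-1$. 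This follows because any history reaching $i$ has a first arrival time, and the prefix up to that time is the desired $\alpha$.
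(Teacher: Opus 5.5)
Your proof is correct and follows the paper's argument. The continuation $c^{m}b$ separates any two residual phases above $i$ exactly as in the paper, and the fiber bounds are then summed. You are more careful than the paper in exhibiting explicit histories $\alpha c^{k}$ whose $T$-states are forced into $\lambda^{-1}(i)$ by equivalence, a step the paper leaves implicit.

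The one step you should justify rather than assume is that $\alpha$ can be taken nonempty when $i=s_0$, since equivalence never constrains $\lambda(q_0)$. This is where destination-completeness is actually needed. Some $u$ has $g_i(u)=\ell(i)$, and by \eqref{eq:operational_update} this self-loop decision sends $(i,0)$ back to $(i,d_i-1)$. So $\alpha c^{d_i-1}u$ is a nonempty fresh-entry history, which also makes your ``leave and re-enter'' detour unnecessary.
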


\begin{proof}
This is the same separation idea, now used to distinguish all residual phases above a fixed visible state.

Write \(m=d_i\). We show that the \(m\) residual phases above visible state \(i\) are pairwise distinguishable by future continuations. Let \(0 \le r < r' \le m-1\). Choose a non-self realizable departure symbol \(b \in \Omega\) and destination \(j \in \Out(i)\setminus\{i\}\) such that
\[
g_i(b)=\ell(j).
\]
Fix an arbitrary symbol \(c \in \Omega\), and consider the continuation
\[
x = c^r b.
\]

Start from the operational state \((i,r)\). The prefix \(c^r\) consumes exactly the remaining forced-hold budget and brings the machine to phase \(0\). The final symbol \(b\) is therefore read at decision phase, so the machine departs from \(i\) to \(j\). Thus the output on continuation \(x\) begins with a block of \(i\)'s of length exactly \(r\), followed by \(j\).

Now start instead from operational state \((i,r')\). After the same prefix \(c^r\), the residual phase is still \(r'-r>0\), so the final symbol \(b\) is read under forced hold. Hence the machine remains in visible state \(i\) throughout the whole continuation \(x\). In particular, the resulting output word differs from the previous one.

Therefore the future behaviors obtained from residual phases \(r\) and \(r'\) are different. Any equivalent deterministic transducer must realize these distinguishable behaviors by distinct reachable control states. Since all these behaviors project to the same visible state \(i\), the fiber \(\lambda^{-1}(i)\) must contain at least \(m=d_i\) distinct control states.

If the assumption holds for every visible state, the fibers \(\lambda^{-1}(i)\) are disjoint and summing the per-fiber lower bounds yields
\[
|Q| = \sum_{i\in S} |\lambda^{-1}(i)|
    \ge \sum_{i\in S} d_i.
\]
\end{proof}

\begin{cor}[Tightness of the phase expansion]
\label{cor:tightness}
Under the hypotheses of \autoref{thm:lower_bound_augmentation}, the phase-expanded realization of \autoref{thm:compilation_transducer} is optimal among deterministic graph-respecting labeled transducers equivalent to \(\mathcal A\). In particular,
\[
\min |Q| = \sum_{i\in S} d_i.
\]
\end{cor}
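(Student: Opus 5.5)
The corollary follows by combining the upper bound of \autoref{thm:compilation_transducer} with the lower bound of \autoref{thm:lower_bound_augmentation}. The only extra work is to confirm that the phase-expanded machine belongs to the class over which we minimize, namely deterministic graph-respecting labeled transducers equivalent to \(\mathcal A\).

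First, I view \(T_{\mathcal A}\) from \autoref{thm:compilation_transducer} as a graph-respecting labeled transducer in the sense of Definition~\ref{def:graph_respecting}. Take \(\lambda(i,r):=i\).

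\begin{itemize}
\item \emph{Surjectivity.} Every \(i\in S\) has \(d_i\ge 1\), so the state \((i,0)\) exists and \(\lambda\) is onto \(S\).
\item \emph{Graph constraint.} If \(r>0\), the successor \((i,r-1)\) has label \(i\), and \(i\in\Out(i)\) because the graph is self-looping. If \(r=0\), the successor lies in the fiber of \(\delta(i,g_i(u))\), which is in \(\Out(i)\) by definition of \(\delta\). Hence \(\lambda(\Delta(q,u))\in\Out(\lambda(q))\) in both cases.
\item \emph{Output convention.} The emitted symbol \(\omega((i,r),u)\) is, by the formulas in the proof of \autoref{thm:compilation_transducer}, exactly the visible label of the successor state \(\Delta((i,r),u)\). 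This is the output convention of Definition~\ref{def:graph_respecting}.
\end{itemize}

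Equivalence with \(\mathcal A\) is the identity \(\omega^*(q_0,w)=\Out_{\mathcal A}(w)\) already established there. So \(T_{\mathcal A}\) is admissible and has \(\sum_i d_i\) states, which gives \(\min|Q|\le\sum_i d_i\).

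Second, the hypotheses of \autoref{thm:lower_bound_augmentation} hold: destination-completeness, and every visible state reachable with a realizable non-self departure. Under these, every admissible \(T\) satisfies \(|Q|\ge\sum_i d_i\), so \(\min|Q|\ge\sum_i d_i\). The two bounds together give equality and the claimed optimality.

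There is no substantive obstacle here. The one point needing care is the global form of the lower bound. It sums the per-fiber bounds \(|\lambda^{-1}(i)|\ge d_i\) over all \(i\in S\), so it requires the reachability and departure hypotheses for \emph{every} visible state. I therefore read ``under the hypotheses'' as these global hypotheses, and I use the disjointness of the fibers \(\lambda^{-1}(i)\), which holds because \(\lambda\) is a function. The minimum is taken over a nonempty class, since \(T_{\mathcal A}\) belongs to it, so it is attained and equals \(\sum_{i\in S} d_i\).
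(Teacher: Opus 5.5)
Your proof is correct and follows the same route as the paper: the upper bound from the phase expansion of \autoref{thm:compilation_transducer} is matched against the global form of the lower bound in \autoref{thm:lower_bound_augmentation}. Your explicit check that $T_{\mathcal A}$, with $\lambda(i,r)=i$, has surjective labeling, respects $\Out$, and emits the label of the successor state fills in the membership step the paper leaves implicit, and reading ``the hypotheses'' as the global ones (every visible state reachable, each with a realizable non-self departure) is the right interpretation.
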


\begin{proof}
\autoref{thm:compilation_transducer} constructs an equivalent graph-respecting deterministic realization with exactly \(\sum_i d_i\) control states, while \autoref{thm:lower_bound_augmentation} shows that no equivalent graph-respecting deterministic realization can use fewer. The bounds match. \qedhere
\end{proof}

So the dwell budget above each visible state has an exact deterministic state cost. The phase-expanded realization is not just sufficient; in the graph-preserving setting it is also minimal.

\section{Intrinsic characterization: path fibers, chain monoids, and converse representation}\label{sec:intrinsic_characterization}

We now identify the class of deterministic transducers that arise from phase expansion, without assuming phase coordinates in advance. The strategy is to define an intrinsic structural property --- fiber-linearity --- purely in terms of the hold behavior inside each visible-state fiber and the reset behavior of decisions. We then show that this property characterizes exactly the compiled DLSL realizations and that it yields a unique canonical representative.

\subsection{Fiber-linear transducers}

Recall that a graph-respecting labeled transducer is a deterministic transducer
\[
T=(Q,q_0,\Omega,S,\Delta,\lambda),
\]
where \(\lambda:Q\to S\) is a surjective visible-state labeling such that
\[
\lambda(\Delta(q,u))\in\Out(\lambda(q))
\qquad \text{for all } q\in Q,\ u\in\Omega.
\]

\begin{defi}[Forced-hold state]
\label{def:forced_hold}
Let \(T=(Q,q_0,\Omega,S,\Delta,\lambda)\) be a graph-respecting transducer. A control state \(q\in Q\) is a \emph{forced-hold state} if there exists a control state \(q'\in Q\) such that
\[
\lambda(q')=\lambda(q)
\qquad\text{and}\qquad
\Delta(q,u)=q'
\quad\text{for every }u\in\Omega.
\]
In that case, \(q'\) is uniquely determined by determinism, and we denote it by
\[
\holdsucc(q).
\]
A control state that is not forced-hold is called a \emph{decision state}.
\end{defi}

\begin{defi}[Fiber hold graph]
\label{def:fiber_hold_graph}
For a visible state \(i\in S\), let
\[
\Fib(i):=\lambda^{-1}(i).
\]
The \emph{fiber hold graph} \(H_i\) is the directed graph with vertex set \(\Fib(i)\) and an edge
\[
q \longrightarrow \holdsucc(q)
\]
for every forced-hold state \(q\in\Fib(i)\).
\end{defi}

\begin{defi}[Fiber-linear graph-respecting transducer]
\label{def:fiber_linear}
A graph-respecting transducer
\[
T=(Q,q_0,\Omega,S,\Delta,\lambda)
\]
is \emph{fiber-linear} if, for every visible state \(i\in S\), the following hold:
\begin{enumerate}
    \item \textbf{Path condition inside the fiber:} the fiber hold graph \(H_i\) is a directed simple path on the whole fiber \(\Fib(i)\), ending at a unique terminal vertex \(p_i\). Equivalently, \(p_i\) is the unique decision state in \(\Fib(i)\), every other state in \(\Fib(i)\) is forced-hold, and repeated application of \(\holdsucc\) walks through the fiber in a single linear chain until \(p_i\) is reached.
    \item \textbf{Reset-on-decision:} let \(t_i\) denote the unique initial vertex of the path \(H_i\) (so \(t_i=p_i\) when \(|\Fib(i)|=1\)). Then for every input symbol \(u\in\Omega\), if
    \[
    \lambda(\Delta(p_i,u)) = j,
    \]
    we have
    \[
    \Delta(p_i,u)=t_j.
    \]
\end{enumerate}
\end{defi}

The point of \autoref{def:fiber_linear} is that it is intrinsic. It refers only to input-independent hold behavior inside visible-state fibers and to where decisions reset. No phase indexing is assumed at the outset.

\autoref{fig:fiber_linear_structure} makes the geometry explicit. The positive panel shows the unique pattern allowed inside each visible-state fiber: a linear hold chain culminating in one decision state, with every inter-fiber decision entering the designated entry state of the successor fiber. The two obstruction panels anticipate \autoref{cor:obstruction}: any internal cycle or any decision entering an interior state of a successor fiber immediately falls outside the DLSL class.

% ------------------------------------------------------------------
% FIGURE A: Fiber-linear structure and obstructions
% Put immediately after \autoref{def:fiber_linear}, i.e. after the paragraph
% ending with “No phase indexing is assumed at the outset.”
% ------------------------------------------------------------------
\begin{figure}[pos=H]
\centering
\begin{tikzpicture}[
    >=Latex,
    state/.style={circle,draw,minimum size=8mm,inner sep=0pt,font=\small},
    term/.style={circle,draw,thick,minimum size=8mm,inner sep=0pt,font=\small},
    bad/.style={circle,draw,dashed,minimum size=8mm,inner sep=0pt,font=\small},
    lab/.style={font=\small}
]

% Main good panel
\node[font=\small\bfseries] at (0,3.0) {(a) Fiber-linear fiber structure};

% Left fiber
\node[state] (ti) at (-2.8,1.4) {$t_i$};
\node[state] (mi) at (-2.8,-0.6) {$\cdot$};
\node[term]  (pi) at (-2.8,-2.6) {$p_i$};
\draw[->,thick] (ti) -- (mi);
\draw[->,thick] (mi) -- (pi);
\node[lab] at (-2.0,0.4) {hold};
\node[lab] at (-2.0,-1.6) {hold};
\node[lab] at (-2.8,-3.4) {$\Fib(i)$};

% Right fiber
\node[state] (tj) at (2.0,1.4) {$t_j$};
\node[state] (mj) at (2.0,-0.6) {$\cdot$};
\node[term]  (pj) at (2.0,-2.6) {$p_j$};
\draw[->,thick] (tj) -- (mj);
\draw[->,thick] (mj) -- (pj);
\node[lab] at (2.8,0.4) {hold};
\node[lab] at (2.8,-1.6) {hold};
\node[lab] at (2.0,-3.4) {$\Fib(j)$};

\draw[->,very thick,bend left=12] (pi.east) to node[above,lab] {decision on $u$} (tj.west);
\node[lab,align=center] at (0,-4.4) {each fiber is a single hold path, and each decision from $p_i$ resets to the entry state $t_j$};

% Obstruction panels
\node[font=\small\bfseries] at (-3.0,-5.8) {(b) Forbidden: cycle inside a fiber};
\node[bad] (c1) at (-4.0,-7.1) {$q_0$};
\node[bad] (c2) at (-2.0,-7.1) {$q_1$};
\draw[->,thick] (c1) to[bend left=18] node[above,lab] {hold} (c2);
\draw[->,thick] (c2) to[bend left=18] node[below,lab] {hold} (c1);
\node[lab,align=center] at (-3.0,-8.2) {not a simple path};

\node[font=\small\bfseries] at (3.0,-5.8) {(c) Forbidden: nonentry target};
\node[state] (xt) at (1.4,-6.6) {$t_j$};
\node[bad]   (xm) at (1.4,-8.2) {$e$};
\node[term]  (xp) at (1.4,-10.2) {$p_j$};
\draw[->,thick] (xt) -- (xm);
\draw[->,thick] (xm) -- (xp);
\node[term]  (src) at (4.3,-8.2) {$p_i$};
\draw[->,very thick,bend left=10] (src.west) to node[above,lab] {$u$} (xm.east);
\node[lab,align=center] at (2.8,-11.4) {decision enters an interior state of $\Fib(j)$\\instead of the designated entry state};

\end{tikzpicture}
\caption{Fiber-linear structure and minimal obstructions. Panel (a) shows the geometric content of \autoref{def:fiber_linear}: each visible-state fiber is a single directed hold path ending at one decision state, and every decision transition resets to the entry state of the successor fiber. Panels (b) and (c) show two immediate obstructions used later in \autoref{cor:obstruction}.}
\label{fig:fiber_linear_structure}
\end{figure}
The same class can also be described in standard algebraic language. Each fiber carries a single unary ``countdown'' transformation whose powers form a finite chain monoid; the reset condition says that cross-fiber actions factor through the unique active state at the bottom of that chain.

\begin{prop}[Transformation-monoid characterization]
\label{prop:semigroup_characterization}
Let
\[
T=(Q,q_0,\Omega,S,\Delta,\lambda)
\]
be a graph-respecting transducer. Then the following are equivalent.
\begin{enumerate}
    \item \(T\) is fiber-linear.
    \item For every visible state \(i\in S\), there exist states \(t_i,p_i\in \Fib(i)\) and a transformation
    \[
    h_i:\Fib(i)\to \Fib(i)
    \]
    such that:
    \begin{enumerate}
        \item the transformation monoid \(\langle h_i\rangle\) has size \(|\Fib(i)|\), is generated by the single map \(h_i\), and is a finite chain monoid in the sense that its unique nonidentity idempotent is the constant map onto \(p_i\);
        \item for every \(q\in \Fib(i)\setminus\{p_i\}\) and every input symbol \(u\in\Omega\), one has
        \[
        \Delta(q,u)=h_i(q);
        \]
        in particular, every nonterminal fiber state is input-independent;
        \item if \(\lambda(\Delta(p_i,u))=j\), then
        \[
        \Delta(p_i,u)=t_j.
        \]
    \end{enumerate}
\end{enumerate}
In particular, fiber-linear transducers are exactly the graph-respecting deterministic transducers whose per-fiber dynamics are governed by monogenic aperiodic chain monoids and whose cross-fiber actions factor through the unique active state of each fiber.
\end{prop}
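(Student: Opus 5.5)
The plan is to prove the two implications separately, using the hold successor \(\holdsucc\) of \autoref{def:forced_hold} as the bridge between the graph picture and the transformation picture.

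\emph{From 1 to 2.} Assume \(T\) is fiber-linear. Fix a visible state \(i\) and write the hold path as \(t_i=q_{m-1}\to q_{m-2}\to\cdots\to q_0=p_i\), where \(m=|\Fib(i)|\). Define
\[
h_i(q_k)=q_{k-1}\ \ (k\ge 1),\qquad h_i(p_i)=p_i .
\]
Then \(h_i^k(q_r)=q_{\max(r-k,0)}\), so the powers \(h_i^0,\dots,h_i^{m-1}\) are pairwise distinct (they differ at \(t_i\)), and \(h_i^{m}=h_i^{m-1}\) is the constant map \(c_{p_i}\) onto \(p_i\). Hence \(\langle h_i\rangle=\{h_i^0,\dots,h_i^{m-1}\}\) has exactly \(m\) elements and is monogenic. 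For its idempotents: \(h_i^k\) with \(0<k<m-1\) is not idempotent, since \(h_i^{2k}\ne h_i^k\) when evaluated at \(t_i\). So the only idempotents are the identity and \(c_{p_i}\). (When \(m=1\) the identity is itself constant onto \(p_i\), and I will note this degenerate case explicitly.) Condition (b) is the definition of forced-hold, because \(\Delta(q,u)=\holdsucc(q)=h_i(q)\) for every \(q\ne p_i\). Condition (c) is reset-on-decision.

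\emph{From 2 to 1.} Assume the data of item 2. By (b), every \(q\in\Fib(i)\setminus\{p_i\}\) is forced-hold with \(\holdsucc(q)=h_i(q)\); note that \(h_i(q)\in\Fib(i)\), as required. So \(H_i\) consists of the edges \(q\to h_i(q)\) for \(q\ne p_i\), possibly together with an edge out of \(p_i\) if \(p_i\) happens to be forced-hold.

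\textbf{Main obstacle.} The key step is to extract the path structure from the monoid data.

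\begin{enumerate}
\item Since \(\langle h_i\rangle\) is finite and monogenic, it has index \(k\) and period \(\pi\). Its idempotent in the cyclic part is \(c_{p_i}\), a constant map, so the cyclic part is the trivial group. This gives \(\pi=1\) and \(h_i^{k}=c_{p_i}\).
\item Because \(|\langle h_i\rangle|=m\), we get \(k=m-1\), with \(h_i^0,\dots,h_i^{m-1}\) distinct.
\item Since \(h_i^{m-1}\) is constant onto \(p_i\), every \(h_i\)-orbit eventually reaches \(p_i\), and \(h_i(p_i)=p_i\). So the functional graph of \(h_i\) restricted to \(\Fib(i)\setminus\{p_i\}\) is a forest rooted at \(p_i\), with no cycles.
\item Some point \(x\) has depth exactly \(m-1\): otherwise \(h_i^{m-2}\) would already be constant, contradicting distinctness. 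The orbit \(x,h_i(x),\dots,h_i^{m-1}(x)=p_i\) then consists of \(m\) distinct points and so exhausts \(\Fib(i)\).
\end{enumerate}

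Hence \(H_i\) is a simple path from \(x\) to \(p_i\).

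It remains to show that \(p_i\) is a decision state, that is, that it contributes no extra hold edge. Here I expect to need care. If \(p_i\) were forced-hold, then its hold successor would lie in its own fiber. I would attempt to rule this out either by reading (c) together with the path requirement, or by interpreting \(H_i\) as defined through \(h_i\). If neither works, I would record this as an implicit convention (\(p_i\) is the designated decision state), as the paper's intended reading suggests. Finally, identify \(t_i=x\). This follows because the initial vertex of the path is the unique point outside the image of \(h_i\). Reset-on-decision is then exactly (c), which completes \(2\Rightarrow1\).

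The concluding ``in particular'' sentence is then a restatement. The monoids above are aperiodic because the period is \(1\), and cross-fiber actions occur only at \(p_i\) by (b) and land on \(t_j\) by (c).
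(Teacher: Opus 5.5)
Your route is the paper's. In $1\Rightarrow2$ you set $h_i=\holdsucc$ off $p_i$ and $h_i(p_i)=p_i$, and in $2\Rightarrow1$ you recover the hold path from the chain monoid. Your index/period/depth argument for that recovery is correct, and more careful than the paper's one-line ``exactly one state at each distance from $p_i$''.

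\textbf{First gap: the identification $t_i=x$.} Your justification, ``the initial vertex of the path is the unique point outside the image of $h_i$,'' is a non sequitur. You have shown that $x$ is the unique point outside $h_i(\Fib(i))$. But nothing in (a)--(c) places the $t_i$ supplied by item~2 outside that image: $t_i$ enters only through (c). The step genuinely fails. Take $S=\{i,j\}$, $E=S\times S$, $\Omega=\{u,v\}$, and for $k\in\{i,j\}$ let $\Fib(k)=\{a_k,b_k\}$ with
$\Delta(a_k,\cdot)=b_k$, $\Delta(b_k,v)=b_k$, $\Delta(b_i,u)=b_j$, $\Delta(b_j,u)=b_i$.
\begin{itemize}
\item Item~2 holds with $p_k=t_k=b_k$ and $h_k$ the constant map onto $b_k$.
\item Each $b_k$ is a genuine decision state.
\item Each $H_k$ is the path $a_k\to b_k$, yet every decision lands on the bottom state $b_\cdot$ of its target fiber instead of the top $a_\cdot$.
\end{itemize}
So $T$ is not fiber-linear; it is the edge-entry system with $d\equiv 2$, $\rho\equiv 0$.

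\textbf{Second gap: $p_i$ need not be a decision state.} Neither remedy you mention can close this. $H_i$ is defined from $\Delta$, not from $h_i$. And (c) is compatible with $p_i$ being forced-hold: it then merely forces $\Delta(p_i,u)=t_i$ for all $u$. For example, take one visible state $i$ with $\Fib(i)=\{a,b\}$, $\Delta(a,\cdot)=b$, $\Delta(b,\cdot)=a$. Item~2 holds with $p_i=b$, $t_i=a$ (now correctly the top) and $h_i$ constant onto $b$. Yet $H_i$ is the forbidden $2$-cycle of Figure~\ref{fig:fiber_linear_structure}(b). (That transducer is the phase expansion of the one-state DLSL system with $d_i=2$, so the same degeneracy also affects Theorem~\ref{thm:compiled_fiber_linear}.)

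\textbf{What this means.} Neither step can be proved, because $2\Rightarrow1$ is false as literally stated. The paper's proof passes over exactly these two points: it simply asserts that the path has ``initial vertex $t_i$'' and tacitly treats $p_i$ as the decision state. The fix is to strengthen item~2 with two clauses:
\begin{itemize}
\item $t_i\notin h_i(\Fib(i))$ whenever $|\Fib(i)|\ge 2$;
\item $p_i$ is a decision state, equivalently (given (c)) $\Delta(p_i,\cdot)$ is not constantly $t_i$.
\end{itemize}
With these clauses your argument goes through verbatim, and $1\Rightarrow2$ still holds. Calling them an ``implicit convention'' is acceptable only if you say explicitly that the proposition needs them.
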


\begin{proof}
Assume first that \(T\) is fiber-linear. For a fiber \(\Fib(i)\), let \(p_i\) be the terminal vertex of the hold path and \(t_i\) its initial vertex. Define
\[
h_i(q):=
\begin{cases}
\holdsucc(q), & q\neq p_i,\\
p_i, & q=p_i.
\end{cases}
\]
Because the fiber hold graph is a directed simple path on all vertices of \(\Fib(i)\), repeated application of \(h_i\) moves every state one step closer to \(p_i\), and the powers
\[
\mathrm{id},h_i,h_i^2,\dots,h_i^{|\Fib(i)|-1}
\]
are pairwise distinct. The last power is the constant map onto \(p_i\), and it is the unique nonidentity idempotent. Clause (b) is just the definition of forced-hold, and clause (c) is the reset-on-decision property.

Conversely, assume the transformation-monoid formulation. Since \(\langle h_i\rangle\) is a finite chain monoid of size \(|\Fib(i)|\), the orbit of every state under \(h_i\) is linearly ordered and terminates at the unique image point \(p_i\) of the nonidentity idempotent. Because the monoid has exactly \(|\Fib(i)|\) elements, there is exactly one state at each distance from \(p_i\), so the directed graph formed by the edges \(q\to h_i(q)\) for \(q\neq p_i\) is a simple path on the whole fiber, with initial vertex \(t_i\) and terminal vertex \(p_i\). Clause (b) then says that every nonterminal fiber state is forced-hold, and clause (c) is precisely reset-on-decision. Hence \(T\) is fiber-linear.
\end{proof}

\paragraph{Position inside classical deterministic transducers.}
Every compiled DLSL realization is a sequential transducer, hence belongs to the classical subsequential world. The inclusion is strict: an accessible deterministic graph-respecting sequential transducer may fail to be fiber-linear simply because one visible-state fiber contains a cycle instead of a single hold path, or because a decision transition enters a nonentry state of a successor fiber.

\begin{thm}[Compiled realizations are fiber-linear]
\label{thm:compiled_fiber_linear}
Let \(\mathcal A\) be a symbolic DLSL system with dwell, and let \(T_{\mathcal A}\) be the deterministic realization constructed in \autoref{thm:compilation_transducer}. Then \(T_{\mathcal A}\) is fiber-linear.
\end{thm}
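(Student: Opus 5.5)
The plan is to verify the two clauses of \autoref{def:fiber_linear} directly on the explicit phase expansion of \autoref{thm:compilation_transducer}. There we have $Q=\{(i,r)\}$, labeling $\lambda(i,r)=i$, and the transition map $\Delta$ given by the formulas in that proof. The first step is that $T_{\mathcal A}$ is graph-respecting. For $r>0$ the successor $(i,r-1)$ has label $i$, and $i\in\Out(i)$ because the graph is self-looping. For $r=0$ the successor has label $\delta(i,g_i(u))\in\Out(i)$ by definition of $\delta$. Surjectivity of $\lambda$ holds because each fiber contains at least $(i,0)$, since $d_i\ge 1$. So $\Fib(i)=\{(i,0),\dots,(i,d_i-1)\}$.

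The second step classifies the states. For $r>0$, $\Delta((i,r),u)=(i,r-1)$ for every $u$, and this target lies in the same fiber. Hence $(i,r)$ is forced-hold with $\holdsucc(i,r)=(i,r-1)$. The hold graph $H_i$ therefore contains the chain
\[
(i,d_i-1)\to(i,d_i-2)\to\cdots\to(i,0),
\]
which is a simple path through the whole fiber, with initial vertex $t_i=(i,d_i-1)$ and terminal vertex $(i,0)$. It remains to show that $(i,0)$ is a decision state, i.e.\ that it contributes no outgoing hold edge. Then $H_i$ is exactly this path and $p_i=(i,0)$.

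This is the step I expect to be the main obstacle. $(i,0)$ is forced-hold exactly when $u\mapsto\Delta((i,0),u)$ is constant with value in $\Fib(i)$, and that happens precisely when $g_i\equiv\ell(i)$. In that case the state $(i,0)$ would have the hold edge $(i,0)\to(i,d_i-1)$, which closes a cycle (a self-loop if $d_i=1$). So the argument needs $g_i$ to propose some non-self label at least once. Destination-completeness guarantees this whenever $\Out(i)\setminus\{i\}\neq\varnothing$, and the paper uses exactly this realizable-departure hypothesis elsewhere. I would state this explicitly as the regime in which the theorem is meant, and treat the degenerate absorbing case separately, either by excluding it or by the convention that phase-$0$ states count as decision states. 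Outside that case some input $u$ yields either a target outside $\Fib(i)$ or two distinct targets, so $(i,0)$ is not forced-hold.

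The last step is reset-on-decision. If $\lambda(\Delta(p_i,u))=j$, then $j=\delta(i,g_i(u))$ and, by construction, $\Delta(p_i,u)=(j,d_j-1)=t_j$. This is exactly clause (2) of \autoref{def:fiber_linear}, and it completes the proof.
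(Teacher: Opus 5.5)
Your argument follows the paper's proof. You show that each $(i,r)$ with $r>0$ is forced-hold with $\holdsucc(i,r)=(i,r-1)$, take $p_i=(i,0)$ and $t_i=(i,d_i-1)$, and read reset-on-decision off $\Delta((i,0),u)=(j,d_j-1)=t_j$.

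The obstacle you isolate is genuine, and the paper's proof passes over it. The paper asserts that $(i,0)$ is not forced-hold ``since from that state the successor depends on the input through $g_i$''. That fails when $g_i\equiv\ell(i)$, which is unavoidable if $\Out(i)=\{i\}$, because then $\Sigma_i=\{\ell(i)\}$. In that case $\Delta((i,0),u)=(i,d_i-1)$ for every $u$, so $H_i$ closes into a cycle and $T_{\mathcal A}$ is not fiber-linear. The statement is therefore false without your extra hypothesis that each $g_i$ proposes some non-self label. That hypothesis is exactly what the reconstruction of \autoref{thm:intrinsic_converse} always produces, so it is the right condition for a two-way characterization.

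Of your two repairs, only the exclusion is sound. Declaring phase-$0$ states to be decision states ``by convention'' would override the intrinsic \autoref{def:forced_hold}, on which the converse and the recognition algorithm depend.
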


\begin{proof}
The phase-expanded realization has control states
\[
Q=\{(i,r): i\in S,\ 0\le r\le d_i-1\},
\]
with visible-state labeling \(\lambda(i,r)=i\).

Fix a visible state \(i\). For every state \((i,r)\) with \(r>0\), the operational update is
\[
\Delta((i,r),u)=(i,r-1)
\qquad \text{for all } u\in\Omega,
\]
so \((i,r)\) is forced-hold and
\[
\holdsucc((i,r))=(i,r-1).
\]
The unique state in the fiber \(\Fib(i)\) that is not forced-hold is \((i,0)\), since from that state the successor depends on the input through \(g_i\). Therefore the fiber hold graph is exactly the directed path
\[
(i,d_i-1)\to(i,d_i-2)\to\cdots\to(i,1)\to(i,0).
\]
Thus the path condition holds, with terminal vertex \(p_i=(i,0)\) and initial vertex \(t_i=(i,d_i-1)\).

Now consider an input symbol \(u\in\Omega\) read at the decision state \((i,0)\). By the operational semantics,
\[
\Delta((i,0),u)=\bigl(j,d_j-1\bigr),
\qquad
j=\delta(i,g_i(u)).
\]
But \((j,d_j-1)=t_j\), the initial vertex of the path in the fiber over \(j\). Hence the reset-on-decision condition also holds. Therefore \(T_{\mathcal A}\) is fiber-linear. \qedhere
\end{proof}

Theorem~\ref{thm:compiled_fiber_linear} shows that every compiled DLSL realization is fiber-linear. The next theorem establishes the converse: fiber-linearity is not merely a property of compiled realizations but characterizes them exactly.

\begin{thm}[Intrinsic converse representation]
\label{thm:intrinsic_converse}
Let
\[
T=(Q,q_0,\Omega,S,\Delta,\lambda)
\]
be a fiber-linear graph-respecting transducer. Then there exists an injectively labeled symbolic DLSL system with dwell
\[
\mathcal A_T=(S,E,\ell,s_0,d,(g_i)_{i\in S})
\]
such that the phase-expanded realization of \(\mathcal A_T\) is isomorphic to \(T\). Moreover, \(\mathcal A_T\) is unique up to injective relabeling of the visible states.
\end{thm}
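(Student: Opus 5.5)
The plan is to read the DLSL data directly off the fiber structure of \(T\) and then check that the phase expansion of that data reproduces \(T\) state by state.

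\textbf{Construction of \(\mathcal A_T\).} Keep the visible-state set \(S\) and take an injective labeling \(\ell:S\to\Sigma\). The simplest choice is \(\Sigma=S\) and \(\ell=\mathrm{id}\), which makes label-determinism automatic. For each \(i\) set \(d_i:=|\Fib(i)|\); since \(\lambda\) is surjective, \(d_i\ge 1\). Let \(s_0:=\lambda(q_0)\). The edge set is
\[
E:=\{(i,j): j=\lambda(\Delta(p_i,u)) \text{ for some } u\in\Omega\}\cup\{(i,i):i\in S\}.
\]
Adding the self-loops makes the graph self-looping, and they lie in \(\Out(i)\) of the ambient graph, so \(T\) stays graph-respecting. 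The local map is \(g_i(u):=\ell(\lambda(\Delta(p_i,u)))\), which lies in \(\Sigma_i\) by the definition of \(E\).

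\textbf{The isomorphism.} By the path condition of \autoref{def:fiber_linear}, each fiber is a single hold chain \(t_i=q_{i,d_i-1}\to q_{i,d_i-2}\to\cdots\to q_{i,0}=p_i\). Index \(q_{i,r}\) by its hold-distance \(r\) to \(p_i\), and define \(\phi(q_{i,r}):=(i,r)\). This is a bijection \(Q\to\{(i,r):0\le r\le d_i-1\}\) that commutes with the labelings. It remains to check the transitions.
\begin{itemize}
\item For \(r>0\), the state \(q_{i,r}\) is forced-hold, so \(\Delta(q_{i,r},u)=q_{i,r-1}\) for every \(u\), which matches \((i,r)\mapsto(i,r-1)\).
\item At \(p_i\), reset-on-decision gives \(\Delta(p_i,u)=t_j\) with \(j=\lambda(\Delta(p_i,u))=\delta(i,g_i(u))\). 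We have \(\phi(t_j)=(j,d_j-1)\), exactly as in \autoref{thm:compilation_transducer}.
\end{itemize}

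\textbf{Initial state (main obstacle).} The isomorphism also has to send \(q_0\) to \((s_0,d_{s_0}-1)\), that is, \(q_0=t_{s_0}\). Nothing in \autoref{def:fiber_linear} forces this, so the theorem needs an extra hypothesis or convention. One option is to restrict to accessible \(T\), but accessibility alone does not give \(q_0=t_{s_0}\) when \(q_0\) sits in the interior of its fiber. My plan is therefore to adopt the convention that the start state is the entry state of its fiber, and to state this explicitly. If that convention is unavailable, the fallback is to allow an initial residual phase \(r_0\) in the DLSL data.

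\textbf{Uniqueness.} Suppose \(\mathcal A'\) is injectively labeled and its expansion is isomorphic to \(T\) via some \(\psi\). Isomorphisms preserve forced-hold states and hold successors, so \(\psi\) maps fibers to fibers. Hence \(d'_i=|\Fib(i)|=d_i\). The decision state of the expansion is \((i,0)\), and the visible successor it reaches on input \(u\) is \(\lambda(\Delta(p_i,u))\). This determines \(\delta'(i,g'_i(u))\), and injectivity of the labeling then determines \(g'_i\) up to the relabeling \(\ell'\circ\ell^{-1}\). The initial state fixes \(s_0\).

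The edge set \(E\), however, is determined only up to edges that are never used. Uniqueness therefore holds for the minimal graph, namely the realized edges plus the self-loops, or relative to the fixed visible graph. I will state it in that form.
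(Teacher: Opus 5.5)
Your construction and transition check are the paper's proof: $d_i=|\Fib(i)|$, $E$ the realized decision edges plus self-loops, $g_i(u)=\lambda(\Delta(p_i,u))$ with $\ell=\mathrm{id}$, and the phase bijection $\Phi$ checked on forced-hold and decision states. Both of your caveats are real omissions in the paper, not defects in your plan.

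\textbf{Initial state.} The paper never defines $s_0$ and never checks that $\Phi(q_0)=(s_0,d_{s_0}-1)$. This can fail even for accessible $T$. Take $\lambda(x_1)=\lambda(x_0)=a$, $\lambda(y)=b$, with $\Delta(x_1,\cdot)=x_0$, $\Delta(x_0,\cdot)=y$, $\Delta(y,\cdot)=x_1$, and $q_0=x_0$. This transducer is fiber-linear and accessible. Its start state is a decision state, whereas the start state $(a,1)$ of any expansion with $d_a=|\Fib(a)|=2$ is forced-hold, so no initial-state-preserving isomorphism exists. Your convention $q_0=t_{\lambda(q_0)}$, or an initial residual phase $r_0$ in the DLSL data, is therefore genuinely required.

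\textbf{Uniqueness.} The paper only observes that the reconstruction is determined by its inputs. Your argument, that an isomorphism preserves fibers and hold successors and hence fixes $d$, $g$ and $s_0$, is the actual proof. Your restriction to the minimal graph, or to a fixed visible graph, is also necessary: edges never selected by any $g_i$ can be added to $E$ without changing the expansion.
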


\begin{proof}
For each visible state \(i\), let \(p_i\) be the unique terminal vertex of the path \(H_i\), and let \(t_i\) be the unique initial vertex of that path. Define
\[
d_i := |\Fib(i)|.
\]
Because \(H_i\) is a directed simple path on \(\Fib(i)\), each state \(q\in\Fib(i)\) has a unique distance to the terminal vertex \(p_i\). Write this distance as
\[
\Phase_i(q)\in\{0,1,\dots,d_i-1\},
\]
so that
\[
\Phase_i(p_i)=0,
\qquad
\Phase_i(t_i)=d_i-1.
\]

We now reconstruct a symbolic DLSL system. Let the visible alphabet be the visible-state set itself:
\[
\Sigma:=S,
\qquad
\ell(i):=i.
\]
This labeling is injective. Define the visible edge set by
\[
E :=
\{(i,j): \exists u\in\Omega \text{ such that } \lambda(\Delta(p_i,u))=j\}
\;\cup\;
\{(i,i): i\in S\}.
\]
The added self-loops make the visible graph self-looping. Because the labeling is injective, the graph is automatically destination-labeled and label-deterministic.

For each visible state \(i\), define
\[
g_i(u):=\lambda(\Delta(p_i,u)).
\]
Since \(\lambda(\Delta(p_i,u))\in\Out(i)\), we indeed have \(g_i(u)\in\Sigma_i\).

Consider now the phase-expanded realization of this symbolic DLSL system. Its control states are pairs
\[
(i,r),\qquad 0\le r\le d_i-1.
\]
Define
\[
\Phi:Q\to \{(i,r): i\in S,\ 0\le r\le d_i-1\}
\]
by
\[
\Phi(q):=\bigl(\lambda(q),\Phase_{\lambda(q)}(q)\bigr).
\]
Because each fiber hold graph is a path, every fiber contains exactly one state at each phase depth \(r\), so \(\Phi\) is bijective.

We show that \(\Phi\) is a transducer isomorphism.

First, suppose \(q\in\Fib(i)\) is forced-hold. Then \(\holdsucc(q)\) is the next state on the path toward \(p_i\), hence
\[
\Phase_i(\holdsucc(q))=\Phase_i(q)-1.
\]
Since \(q\) is forced-hold, \(\Delta(q,u)=\holdsucc(q)\) for every \(u\in\Omega\). Therefore
\[
\Phi(\Delta(q,u))
=
\bigl(i,\Phase_i(q)-1\bigr),
\]
which is exactly the phase-expanded transition from \((i,\Phase_i(q))\) when the phase is positive.

Second, suppose \(q=p_i\) is the decision state in fiber \(i\). By fiber-linearity, for every \(u\in\Omega\), if
\[
\lambda(\Delta(p_i,u))=j,
\]
then
\[
\Delta(p_i,u)=t_j.
\]
Hence
\[
\Phi(\Delta(p_i,u))
=
\bigl(j,\Phase_j(t_j)\bigr)
=
(j,d_j-1),
\]
which is exactly the phase-expanded reset transition from decision phase \(0\) in the symbolic DLSL system defined above.

Thus \(\Phi\) intertwines the transitions of \(T\) and of the phase-expanded realization of \(\mathcal A_T\), and it preserves visible labels by construction. Therefore the two transducers are isomorphic.

Finally, the reconstruction depends only on the visible graph, the fiber lengths, and the injective naming of visible labels. Replacing \(\ell(i)=i\) by any other injective relabeling yields an isomorphic symbolic DLSL system. This is the claimed uniqueness up to injective relabeling. \qedhere
\end{proof}

\autoref{fig:reconstruction_canonicality} summarizes the logic of the converse theorem. Fiber inspection extracts the visible graph, the dwell vector, and the local decision maps; those data in turn determine a unique phase-expanded realization. This reconstruction viewpoint is what later supports both the obstruction test and the canonicality result.

% ------------------------------------------------------------------
% FIGURE B: Reconstruction and canonicality pipeline
% Put immediately after the proof of \autoref{thm:intrinsic_converse}, i.e.
% before \autoref{cor:obstruction}. It can also go immediately before
% \autoref{thm:behavioral_rigidity} if preferred.
% ------------------------------------------------------------------
\begin{figure}[pos=H]
\centering
\resizebox{\linewidth}{!}{%
\begin{tikzpicture}[
    >=Latex,
    box/.style={draw,rounded corners,align=center,minimum height=10mm,inner sep=3.2mm,font=\small},
    arr/.style={->,thick},
    edgelab/.style={font=\small,fill=white,inner xsep=2pt,inner ysep=1pt}
]
\node[box,minimum width=30mm] (T)  at (0,0)      {$T$\\fiber-linear transducer};
\node[box,minimum width=44mm] (R)  at (5.9,0)    {reconstructed data\\$\bigl(S,E,s_0,d,(g_i)_{i\in S}\bigr)$};
\node[box,minimum width=37mm] (C)  at (12.6,0)   {$C(T)$\\canonical phase-expanded realization};
\draw[arr] (T) -- node[midway,above=5pt,edgelab] {reconstruct} (R);
\draw[arr] (R) -- node[midway,above=5pt,edgelab] {phase expansion} (C);

\node[box,minimum width=30mm] (Tp) at (0,-2.8)   {$T'$\\fiber-linear transducer};
\node[box,minimum width=44mm] (Rp) at (5.9,-2.8) {reconstructed data\\$\bigl(S,E,s_0,d',(g'_i)_{i\in S}\bigr)$};
\node[box,minimum width=37mm] (Cp) at (12.6,-2.8) {$C(T')$};
\draw[arr] (Tp) -- node[midway,above=5pt,edgelab] {reconstruct} (Rp);
\draw[arr] (Rp) -- node[midway,above=5pt,edgelab] {phase expansion} (Cp);

\draw[<->,very thick] (T.south) -- node[left,font=\small,align=center,fill=white,inner sep=1pt] {same visible\\transduction} (Tp.north);
\draw[<->,very thick] (C.south) -- node[right,font=\small,align=center,fill=white,inner sep=1pt] {same canonical\\representative} (Cp.north);
\end{tikzpicture}%
}
\caption{Reconstruction and canonical representative. A fiber-linear transducer determines its dwell vector and local decision maps by fiber inspection, and these data determine a unique phase-expanded realization. This is the mechanism behind \autoref{thm:intrinsic_converse}, \autoref{thm:behavioral_rigidity}, and \autoref{cor:equivalence_canonical}.}
\label{fig:reconstruction_canonicality}
\end{figure}
\begin{cor}[Simple obstructions to DLSL representability]
\label{cor:obstruction}
Let
\[
T=(Q,q_0,\Omega,S,\Delta,\lambda)
\]
be a deterministic graph-respecting transducer. If, for some visible state \(i\in S\), the fiber \(\Fib(i)\) fails to form a single directed hold path with a unique decision state, or if some decision transition enters a noninitial state of a successor fiber, then \(T\) is not isomorphic to the phase-expanded realization of any injectively labeled symbolic DLSL system with dwell.
\end{cor}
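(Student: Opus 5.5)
The plan is to argue by contraposition, using \autoref{thm:compiled_fiber_linear} together with the fact that fiber-linearity is invariant under isomorphism of graph-respecting transducers. Suppose \(T\) were isomorphic to the phase-expanded realization \(T_{\mathcal A}\) of some injectively labeled symbolic DLSL system \(\mathcal A\). By \autoref{thm:compiled_fiber_linear}, \(T_{\mathcal A}\) is fiber-linear. It then suffices to show that \(T\) is fiber-linear too. This contradicts either of the two hypothesized defects, because each defect is exactly the negation of one clause of \autoref{def:fiber_linear}.

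The key step is transporting fiber-linearity along an isomorphism \(\Phi:T\to T_{\mathcal A}\). I take an isomorphism to mean a bijection on control states that commutes with \(\Delta\) and preserves visible labels. Here a relabeling of visible states enters, but since the DLSL labeling is injective it only renames fibers, and I will say so explicitly.

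\begin{enumerate}
\item Because \(\Phi\) preserves \(\lambda\) (up to the visible renaming), it maps each fiber \(\Fib(i)\) bijectively onto the corresponding fiber of \(T_{\mathcal A}\).
\item Because \(\Phi\) commutes with \(\Delta\) for every input symbol, \(q\) is forced-hold with \(\holdsucc(q)=q'\) if and only if \(\Phi(q)\) is forced-hold with \(\holdsucc(\Phi(q))=\Phi(q')\). The constancy of \(\Delta(q,\cdot)\) and the same-fiber condition are both preserved.
\item Hence \(\Phi\) restricts to a graph isomorphism \(H_i\cong H'_{i}\) between fiber hold graphs. A graph isomorphic to a directed simple path on its whole vertex set is itself such a path, so the path condition holds in \(T\). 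Its terminal and initial vertices correspond under \(\Phi\): \(\Phi(p_i)=p'_i\) and \(\Phi(t_i)=t'_i\).
\item Reset-on-decision then transfers. If \(\lambda(\Delta(p_i,u))=j\), then \(\Phi(\Delta(p_i,u))=\Delta'(p'_i,u)=t'_j=\Phi(t_j)\), and injectivity of \(\Phi\) gives \(\Delta(p_i,u)=t_j\).
\end{enumerate}

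Finally, I check that the two hypotheses of the corollary are precisely failures of the two clauses. A fiber that is not a single hold path with a unique decision state violates the path condition. A decision transition landing in a noninitial state of the target fiber violates reset-on-decision, since the target is then not \(t_j\). In the second case, if the fiber is not a path at all, we are already in the first case, so we may assume \(t_j\) is well defined.

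I expect no real obstacle. The only point needing care is the formal notion of isomorphism, in particular the allowance for visible relabeling in the uniqueness clause of \autoref{thm:intrinsic_converse}. I would handle it by stating once that the isomorphism preserves \(\lambda\) up to a bijection of \(S\) and that every notion in \autoref{def:fiber_linear} is invariant under such a bijection.
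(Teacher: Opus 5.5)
Your argument is correct as a derivation from \autoref{thm:compiled_fiber_linear}, but it is not the route the paper writes down, and it is the sounder of the two. The paper's proof says only that such a $T$ is not fiber-linear, so ``the claim is immediate from \autoref{thm:intrinsic_converse}.'' That theorem gives the implication fiber-linear $\Rightarrow$ isomorphic to some phase-expanded realization. The corollary needs the opposite implication: isomorphic to a phase-expanded realization $\Rightarrow$ fiber-linear. That implication is exactly \autoref{thm:compiled_fiber_linear} plus invariance of fiber-linearity under isomorphism, which is what you use. Your steps 1--4 supply the invariance argument the paper leaves implicit. The paper's version is shorter only because it leans on the two-sided ``exactly'' characterization, and read literally it cites the wrong half of it.

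One caveat you inherit from \autoref{thm:compiled_fiber_linear}: that theorem fails when some $g_i$ is constantly equal to $\ell(i)$. This is forced when $\Out(i)=\{i\}$, and it is possible without destination-completeness, which the corollary does not assume. In that case $\Delta((i,0),u)=(i,d_i-1)$ for every $u$. So $(i,0)$ is itself forced-hold in the sense of \autoref{def:forced_hold}, $H_i$ is a directed cycle, and $\Fib(i)$ has no decision state. Hence $T_{\mathcal A}$ is not fiber-linear, and taking $T=T_{\mathcal A}$ contradicts the corollary as stated. The smallest instance is $S=\{a\}$, $\ell(a)=a$, $d_a=1$, $g_a\equiv a$.

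Your proof, and the statement, become correct once you add the hypothesis that each $g_i$ takes some value other than $\ell(i)$. Equivalently, every visible state admits a realizable non-self departure, as assumed in \autoref{thm:behavioral_rigidity}. Then $\Delta((i,0),\cdot)$ is either non-constant or constant into a different fiber. So $(i,0)$ is the unique decision state of its fiber, and the rest of your argument goes through unchanged.
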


\begin{proof}
Any such transducer is not fiber-linear. The claim is therefore immediate from \autoref{thm:intrinsic_converse}.
\end{proof}

\begin{thm}[Behavioral rigidity and canonicality]
\label{thm:behavioral_rigidity}
Let
\[
T=(Q,q_0,\Omega,S,\Delta,\lambda)
\qquad\text{and}\qquad
T'=(Q',q'_0,\Omega,S,\Delta',\lambda')
\]
be accessible fiber-linear graph-respecting transducers over the same visible graph \((S,E)\) and with the same initial visible state. Assume that every visible state is reachable and admits a realizable non-self departure in both transducers. If \(T\) and \(T'\) induce the same visible-state transduction
\[
\Out_T(w)=\Out_{T'}(w)
\qquad \text{for all } w\in\Omega^*,
\]
then \(T\) and \(T'\) are isomorphic.
\end{thm}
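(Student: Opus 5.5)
By \autoref{thm:intrinsic_converse}, it suffices to show that \(T\) and \(T'\) reconstruct the same DLSL data. Write \(\mathcal A_T\) and \(\mathcal A_{T'}\) for the reconstructed systems, both with injective labeling \(\ell(i)=i\) on the common visible set \(S\). Each transducer is isomorphic to the phase expansion of its reconstructed system. So it is enough to prove \(d_i=d'_i\) and \(g_i=g'_i\) for every \(i\), and that both systems have the same edge set \(E\) and the same initial operational state. The edge sets are built from the images of \(g_i\) together with all self-loops, so they agree once \(g_i=g'_i\). Accessibility is used in one place only: it ensures the initial state \(q_0\) is the top \(t_{s_0}\) of its fiber. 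If \(q_0\) sat lower on the hold path, the states above it would be unreachable, since the only entries into a fiber are hold steps and decision resets to the top. So \(\Phi(q_0)=(s_0,d_{s_0}-1)\), and likewise for \(T'\).

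\textbf{Step 1: dwell vectors agree.} Fix \(i\). By reachability there is a word \(\alpha\) with \(\Delta(q_0,\alpha)=t_i\): reaching any state of \(\Fib(i)\) means the fiber was entered at \(t_i\), so we truncate \(\alpha\) at that moment. I would use the behavioral invariant
\[
\min\{k\ge 1:\ \exists\, x\in\Omega^{k-1},\, b\in\Omega,\ \text{the last output of } \Out(\alpha x b) \neq i\},
\]
computed after entering fiber \(i\) at its top. In \(T\) this number is exactly \(d_i\): the first \(d_i-1\) steps are forced holds, and at \(p_i\) a realizable non-self departure \(b\) exists. The same holds in \(T'\) with \(d'_i\).

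The difficulty is that the entry times of \(T\) and \(T'\) along \(\alpha\) must coincide for this invariant to be comparable. I would handle this by induction on the length of a prefix. The claim is that \(T\) and \(T'\), after reading any common prefix \(w\), are in states with the same visible label and the same phase. The visible labels agree because the outputs agree. For the phases, the key point is that the time of the last fiber change is visible in the output word, since the output changes label exactly when a non-self decision fires. The phase is then determined by the time since that visible entry together with the history of self-decisions at \(p_i\).

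\textbf{Step 2, the main obstacle.} A self-loop decision at \(p_i\) resets to \(t_i\), because \(\lambda=i\) and reset-on-decision applies. A visible hold therefore does not reveal whether the machine stayed at \(p_i\) via a decision or merely advanced along the chain. I would handle this by proving, jointly by induction on \(|w|\), the statement \(\Phi(\Delta(q_0,w))=\Phi'(\Delta'(q'_0,w))\), together with \(d_i=d'_i\) and \(g_i=g'_i\) on the reached fibers. For the induction to work, the dwell and decision-map equalities must be established before comparing phases.

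The equalities come from a distinguishing argument modeled on \autoref{thm:lower_bound_augmentation}. If two states with the same label had different phases, or different dwell values at a matched entry, then a continuation \(c^r b\), with \(b\) a realizable departure, would produce a first non-self output at different times in the two machines. This contradicts equal transductions.

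\textbf{Step 3: decision maps agree.} Once phases are synchronized, at any reachable word \(w\) landing at \(p_i\) in both machines, equality of the next output letter on input \(u\) gives \(g_i(u)=g'_i(u)\). Hence \(g_i=g'_i\) for every \(i\), since each \(p_i\) is reachable by reachability and by the phase synchronization. Finally, \(\Phi'^{-1}\circ\Phi\) is the required isomorphism between \(T\) and \(T'\).
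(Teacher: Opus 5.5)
Your argument is correct in substance but follows a different route from the paper. The paper synchronizes the two machines only at visible entries. For each \(i\) it picks a word \(x_i\) whose output switches into \(i\) from a different label at the last step, so both \(T\) and \(T'\) provably sit at the top of fiber \(i\). It then reads \(d_i\) off as the minimum of the behavioral set \(D_i\) of departure times, reads \(g_i(u)\) off the output of \(x_i z u\) with \(|z|=d_i-1\), and appeals to reconstruction. This choice of \(x_i\) also disposes of the entry-time difficulty you raise in Step~1.

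You instead prove a Myhill--Nerode-type synchronization lemma: after every common prefix \(w\) the two machines are in the same fiber at the same phase. A phase mismatch \(r<r'\) would be exposed by the continuation \(c^r b\), exactly as in \autoref{thm:lower_bound_augmentation}.

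The paper's route is shorter and never tracks the hidden phase at arbitrary times. Yours controls the phase everywhere, including after self-decisions (which invisibly reset to \(t_i\)) and at \(w=\varepsilon\). It therefore gives \(\Phi(q_0)=\Phi'(q'_0)\) without assuming that \(q_0\) is the top of its fiber, which the paper's case \(x_{s_0}=\varepsilon\) assumes tacitly. It also exhibits the isomorphism directly as \(\Delta(q_0,w)\mapsto\Delta'(q'_0,w)\).

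Two corrections are needed. First, accessibility does not force \(q_0=t_{s_0}\). Any decision into fiber \(s_0\) lands on \(t_{s_0}\): for example a self-decision at \(p_{s_0}\), or a departure from another fiber into \(s_0\). After that, every state above \(q_0\) is reachable. So the justification you give is false. You do not need the claim, though: your separation argument at \(w=\varepsilon\) already yields equal initial phases, which is all \(\Phi'^{-1}\circ\Phi\) requires.

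Second, the joint induction and its stated ordering (dwell and decision maps before phases) are unnecessary, and in fact backwards. The continuation \(c^r b\) needs only a realizable departure \(b\) in the machine with the smaller phase, and the hypothesis supplies one in both transducers. So phase equality holds for every prefix outright, with no prior knowledge of \(d\) or \(g\). Then \(d_i=d'_i\) follows by applying it to a word reaching \(t_i\), which gives \(d'_i\ge d_i\), and symmetrically to a word reaching \(t'_i\). Finally, \(g_i=g'_i\) follows as in your Step~3.
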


\begin{proof}
By \autoref{thm:intrinsic_converse}, both transducers reconstruct to symbolic DLSL systems
\[
\mathcal A=(S,E,s_0,d,(g_i)_{i\in S})
\qquad\text{and}\qquad
\mathcal A'=(S,E,s_0,d',(g'_i)_{i\in S}),
\]
where we suppress the injective labeling because only the visible-state behavior matters here. The behavioral argument below is carried out on the reconstructed DLSL systems \(\mathcal A\) and \(\mathcal A'\), not directly on \(T\) and \(T'\). This is valid because \autoref{thm:intrinsic_converse} gives an isomorphism between each fiber-linear transducer and the phase-expanded realization of its reconstructed data; hence visible transduction equality for \(T\) and \(T'\) implies visible transduction equality for \(\mathcal A\) and \(\mathcal A'\).

Fix a visible state \(i\in S\). Since \(i\) is reachable, choose a word \(x_i\) whose visible-state run enters \(i\) at its last step; if \(i=s_0\), take \(x_i=\varepsilon\). In a phase-expanded DLSL realization, reading \(x_i\) places the machine at the top state of the fiber over \(i\). Because \(i\) admits a realizable non-self departure, the set
\[
D_i:=\{m\ge 1 : \exists y\in\Omega^m \text{ such that the visible run after } x_i y \text{ leaves } i \text{ at its last step}\}
\]
is nonempty. Its minimum is exactly \(d_i\): departure is impossible during the first \(d_i-1\) symbols after entry into \(i\), while some departure is realizable on the \(d_i\)-th symbol. The same reasoning applied to \(T'\) yields \(\min D_i=d'_i\). Since \(T\) and \(T'\) have the same visible transduction, the sets \(D_i\) coincide, and therefore
\[
d_i=d'_i
\qquad\text{for all } i\in S.
\]

Now fix \(u\in\Omega\). Because the first \(d_i-1\) steps after entering \(i\) are forced holds, the visible successor chosen on the \(d_i\)-th step depends only on \(u\). More precisely, for any word \(z\in\Omega^{d_i-1}\), the visible state reached after reading \(x_i z u\) is
\[
\delta(i,g_i(u))
\]
in \(T\) and
\[
\delta(i,g'_i(u))
\]
in \(T'\). Equality of the visible transductions implies
\[
\delta(i,g_i(u))=\delta(i,g'_i(u)).
\]
Since the visible graph is fixed, the destination visible state uniquely determines the local choice. Hence \(g_i(u)=g'_i(u)\) for every \(u\in\Omega\), and therefore
\[
g_i=g'_i
\qquad\text{for all } i\in S.
\]

Thus the reconstructed DLSL data coincide. Their phase-expanded realizations are therefore identical up to the canonical phase-state renaming, which yields an isomorphism between \(T\) and \(T'\).
\end{proof}

\begin{cor}[Canonical representative and equivalence test]
\label{cor:equivalence_canonical}
Every accessible fiber-linear graph-respecting transducer has a unique canonical representative, namely the phase-expanded realization of its reconstructed DLSL data. In particular, equivalence of two accessible fiber-linear transducers over the same visible graph is decidable in polynomial time by reconstruction and comparison of the canonical data.
\end{cor}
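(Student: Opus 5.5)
The plan has two parts: the canonical representative, and a polynomial-time equivalence test built from it.

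\emph{Canonical representative.} Given an accessible fiber-linear transducer \(T\), I would apply \autoref{thm:intrinsic_converse} to obtain its reconstructed data \(\mathcal A_T=(S,E,s_0,d,(g_i)_{i\in S})\), where \(s_0=\lambda(q_0)\). I define \(C(T)\) as the phase-expanded realization of \(\mathcal A_T\) from \autoref{thm:compilation_transducer}. The isomorphism \(\Phi\) of \autoref{thm:intrinsic_converse} shows \(T\cong C(T)\). Uniqueness needs two observations. First, the data \(d_i=|\Fib(i)|\) and \(g_i(u)=\lambda(\Delta(p_i,u))\) are read off intrinsically, because \(p_i\) is characterized as the unique decision state of the fiber. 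Second, the visible labels are fixed to be \(S\) itself, which removes the relabeling freedom. Hence \(C(T)\) is well defined and literally unique, not merely unique up to isomorphism. When the hypotheses of \autoref{thm:behavioral_rigidity} hold, that theorem further shows \(C(T)\) depends only on the visible transduction, since equivalent \(T,T'\) yield identical \((d,g)\).

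\emph{Equivalence test.} I would proceed in three steps.
\begin{enumerate}
\item Reconstruct both data tuples. For each state, decide whether it is forced-hold by scanning its row \(\Delta(q,\cdot)\) to check that the successor is constant and lies in the same fiber. This costs \(O(|Q||\Omega|)\). Then read off \(d_i\), locate \(p_i\), and tabulate \(g_i\) in \(O(|S||\Omega|)\).
\item Compare \(s_0\), the vectors \(d\), and the tables \(g_i\) entrywise.
\item Output ``equivalent'' exactly when all three coincide.
\end{enumerate}
Soundness of a positive answer is immediate. Equal data give identical canonical realizations, and \(T\cong C(T)=C(T')\cong T'\), so the two transducers compute the same transduction.

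Completeness of a negative answer, i.e.\ showing that different data imply inequivalence, is where \autoref{thm:behavioral_rigidity} enters. I expect this to be the main obstacle, because rigidity needs reachability and realizable non-self departures, and the corollary states only accessibility. Accessibility of a fiber-linear \(T\) already makes every fiber, hence every visible state, reachable. Realizable departures, however, may fail. If \(g_i\) is constantly the self-label \(i\), then \(d_i\) is behaviorally invisible: the transducer holds forever in \(i\) regardless of dwell. In that case two transducers with different \(d_i\) are equivalent yet have different data.

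I would repair this by normalizing first. For each visible state with no realizable non-self departure, collapse its dwell to \(d_i:=1\) in the reconstructed data before comparing. After this normalization, the argument of \autoref{thm:behavioral_rigidity} goes through:
\begin{itemize}
\item For every state with a departure, \(\min D_i\) recovers \(d_i\).
\item For every state, \(g_i\) is recovered from the visible successor after \(x_i z u\). For a non-departing state, \(g_i\equiv i\) is forced, so it agrees automatically.
\end{itemize}
The total cost remains \(O(|Q||\Omega|+|Q'||\Omega|)\), which is polynomial. If one prefers to avoid normalization, the statement should be read under the hypotheses of \autoref{thm:behavioral_rigidity}, and the test applies verbatim.
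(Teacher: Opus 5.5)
\textbf{On the obstacle you identify.} Your route matches the paper's: the representative comes from \autoref{thm:intrinsic_converse}, uniqueness from \autoref{thm:behavioral_rigidity}, and the cost bound from \autoref{thm:recognition}. You are right that the paper's one-line proof never checks the hypotheses of rigidity. However, the obstacle you single out does not arise.

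In a fiber-linear transducer each $p_i$ is, by definition, a decision state. Suppose every $u$ gave $\lambda(\Delta(p_i,u))=i$. Then reset-on-decision would force $\Delta(p_i,u)=t_i$ for all $u$. So $p_i$ would be forced-hold with $\holdsucc(p_i)=t_i$, and $H_i$ would contain a cycle (a self-loop if $d_i=1$), contradicting the path condition. Hence every visible state of a fiber-linear transducer automatically admits a realizable non-self departure. Together with accessibility and surjectivity of $\lambda$, this gives all hypotheses of \autoref{thm:behavioral_rigidity}.

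Your normalization is therefore vacuous. Your pair of ``equivalent transducers with different $d_i$'' is not a pair of fiber-linear transducers: when $g_i\equiv\ell(i)$, the compiled state $(i,0)$ is forced-hold. This is an edge case that \autoref{thm:compiled_fiber_linear} overlooks.

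\textbf{The real weak point.} It lies in the soundness step $T\cong C(T)=C(T')\cong T'$, which you inherit from \autoref{thm:intrinsic_converse} exactly as the paper does. The map $\Phi$ sends $q_0$ to $(s_0,\Phase_{s_0}(q_0))$, whereas $C(T)$ starts at $(s_0,d_{s_0}-1)$. The data $(s_0,d,g)$ do not record the initial phase.

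For instance, let $S=\{a,b\}$, $E=S\times S$, $\Omega=\{x,y\}$, $d_a=2$, $d_b=1$, $g_a(x)=g_b(x)=a$, and $g_a(y)=g_b(y)=b$. Let $T$ be its phase expansion (initial state $(a,1)$), and let $T'$ be the same machine started at $(a,0)$. Both are accessible and fiber-linear, since each decision state has two distinct successors. Both reconstruct to identical data. Yet on input $y$ the first emits $a$ and the second emits $b$, so your test wrongly answers ``equivalent''.

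\textbf{The fix.} Add $r_0:=\Phase_{s_0}(q_0)$ to the canonical data and start $C(T)$ at $(s_0,r_0)$, or else assume $q_0=t_{s_0}$. Completeness then needs two extra observations.
\begin{enumerate}
\item By the departure fact above, $r_0+1$ is the earliest step at which some input leaves $s_0$. Hence equivalent transducers share $r_0$.
\item In the rigidity argument for $i=s_0$, the choice $x_{s_0}=\varepsilon$ must be replaced by a word that lands both machines on the top of fiber $s_0$. Either a visible entry into $s_0$ from another fiber works, or $zu$ with $|z|=r_0$ and $g_{s_0}(u)=s_0$ does. Accessibility of $t_{s_0}$ supplies one of these.
\end{enumerate}
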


\begin{proof}
The canonical representative is provided by \autoref{thm:intrinsic_converse}. Uniqueness follows from \autoref{thm:behavioral_rigidity}, and polynomial-time reconstruction from \autoref{thm:recognition}.\qedhere
\end{proof}

\begin{rem}
Concretely, two fiber-linear transducers realizing the same visible-state transduction can be tested for equivalence by reconstructing their dwell vectors \((d_i)_{i\in S}\) and local maps \((g_i)_{i\in S}\) using the algorithm of \autoref{thm:recognition}, and comparing the resulting DLSL data directly. No state-space exploration, product construction, or language-equivalence test is needed. In particular, the canonical representative is computed in \(O(|Q||\Omega|)\) time.
\end{rem}

\section{Recognition and reconstruction}\label{sec:recognition}

The converse characterization is constructive. It turns DLSL representability into a directly testable property of a deterministic transducer description, and the defining fiber structure can be checked algorithmically.

\begin{thm}[Polynomial-time recognition and reconstruction]
\label{thm:recognition}
There is an algorithm that, given a deterministic graph-respecting transducer
\[
T=(Q,q_0,\Omega,S,\Delta,\lambda),
\]
decides in time \(O(|Q||\Omega|)\) whether \(T\) is fiber-linear. When the answer is yes, the algorithm reconstructs the unique underlying symbolic DLSL system of \autoref{thm:intrinsic_converse}, up to injective relabeling of the visible states.
\end{thm}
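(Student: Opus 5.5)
The algorithm will make a single pass over the transition table, then do a linear amount of work per fiber, and finally read off the reconstructed data. This mirrors the proof of \autoref{thm:intrinsic_converse}. We assume the transition table \(\Delta\) is given explicitly, of size \(|Q||\Omega|\), and that \(\lambda\) is given as an array.

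\textbf{Step 1: classify states.} For each \(q\in Q\), scan the row \(\Delta(q,\cdot)\). Then \(q\) is forced-hold exactly when all entries \(\Delta(q,u)\) coincide with one state \(q'\) and \(\lambda(q')=\lambda(q)\). In that case we set \(\holdsucc(q)=q'\); otherwise \(q\) is a decision state. This costs \(O(|\Omega|)\) per state, hence \(O(|Q||\Omega|)\) in total. In the same pass we bucket the states by \(\lambda\) into the fibers \(\Fib(i)\), in \(O(|Q|)\) time.

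\textbf{Step 2: test the path condition in each fiber.} The hold graph \(H_i\) is a functional graph on the non-decision vertices. It is a directed simple path on all of \(\Fib(i)\) if and only if three things hold:
\begin{enumerate}
    \item \(\Fib(i)\) contains exactly one decision state \(p_i\);
    \item every vertex has in-degree at most \(1\) in \(H_i\);
    \item exactly one vertex \(t_i\) has in-degree \(0\), and walking \(\holdsucc\) from \(t_i\) visits all \(|\Fib(i)|\) states and ends at \(p_i\).
\end{enumerate}
Necessity of these conditions is clear. For sufficiency, the walk from \(t_i\) stops only at the unique vertex without an out-edge, namely \(p_i\). Covering every vertex then rules out extra components, such as a detached cycle, which would otherwise give in-degree-\(1\) vertices not reachable from \(t_i\). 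Computing in-degrees and walking the path costs \(O(|\Fib(i)|)\). During the walk we also record \(\Phase_i(q)\) as the distance to \(p_i\).

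\textbf{Step 3: test reset-on-decision and reconstruct.} For each decision state \(p_i\) and each \(u\in\Omega\), let \(j=\lambda(\Delta(p_i,u))\) and check that \(\Delta(p_i,u)=t_j\). This costs \(O(|S||\Omega|)\le O(|Q||\Omega|)\), because \(\lambda\) is surjective. If all the checks pass, \(T\) is fiber-linear. We then output the reconstructed data exactly as in the proof of \autoref{thm:intrinsic_converse}:
\begin{enumerate}
    \item \(d_i=|\Fib(i)|\);
    \item \(g_i(u)=\lambda(\Delta(p_i,u))\);
    \item \(E\) as the set of these observed pairs \((i,g_i(u))\) together with all self-loops;
    \item \(s_0=\lambda(q_0)\).
\end{enumerate}
Building \(E\) costs \(O(|S||\Omega|+|S|)\). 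Correctness of the reconstruction, and its uniqueness up to injective relabeling, then follow directly from \autoref{thm:intrinsic_converse}.

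The main subtlety is the initial state rather than the running time. \autoref{thm:intrinsic_converse} implicitly needs \(q_0\) to sit at the top of its fiber, since the phase expansion starts at \((s_0,d_{s_0}-1)\). The algorithm should therefore also check that \(q_0=t_{\lambda(q_0)}\), and treat this check as part of recognition. Alternatively, if \(\Phi(q_0)\) is allowed to be \((s_0,\Phase(q_0))\), we simply record that phase. Apart from this, the only care needed is the certification in Step 2 that \(H_i\) is one path and not a path plus cycles, which the covering walk handles in linear time.
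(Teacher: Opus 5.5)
Your proposal is correct and follows essentially the same route as the paper: one scan of the transition table to classify forced-hold states, a linear-time path test in each fiber, a reset-to-top check at each decision state, and then reading off \(d_i=|\Fib(i)|\), \(g_i(u)=\lambda(\Delta(p_i,u))\) and \(E\) exactly as in \autoref{thm:intrinsic_converse}; the only difference is cosmetic, since you certify the path by in-degree bounds plus a covering walk from \(t_i\), where the paper uses degree conditions plus weak connectivity. Your remark about the initial state is a valid refinement that the paper leaves implicit: fiber-linearity does not force \(q_0=t_{\lambda(q_0)}\), yet the isomorphism of \autoref{thm:intrinsic_converse} must send \(q_0\) to \((s_0,d_{s_0}-1)\), so one must either add that check (which recognizes a slightly smaller class than the fiber-linear transducers) or record \(\Phase_{s_0}(q_0)\) as a nonstandard initial phase.
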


\begin{proof}
The test is straightforward: identify forced-hold states, inspect each fiber, and then check whether decisions reset to the top of successor fibers.

For each control state \(q\in Q\), inspect the set of successors
\[
\{\Delta(q,u): u\in\Omega\}.
\]
If all these successors coincide with a state \(q'\) satisfying
\[
\lambda(q')=\lambda(q),
\]
then mark \(q\) as forced-hold and set \(\holdsucc(q)=q'\); otherwise mark \(q\) as a decision state. This takes \(O(|Q||\Omega|)\) time.

Now fix a visible state \(i\in S\). Build the fiber hold graph \(H_i\) on \(\Fib(i)=\lambda^{-1}(i)\), with edges \(q\to\holdsucc(q)\) for forced-hold states \(q\in\Fib(i)\). We claim that \(T\) is fiber-linear if and only if, for every \(i\), the following conditions hold:
\begin{enumerate}
    \item \(H_i\) is a directed simple path on all vertices of \(\Fib(i)\);
    \item the unique terminal vertex of \(H_i\) is the unique decision state \(p_i\) in \(\Fib(i)\);
    \item if \(t_i\) denotes the unique initial vertex of \(H_i\), then for every \(u\in\Omega\),
    \[
    \Delta(p_i,u)=t_j
    \quad\text{whenever}\quad
    \lambda(\Delta(p_i,u))=j.
    \]
\end{enumerate}
These are exactly the two clauses of \autoref{def:fiber_linear}, rewritten in graph-theoretic language.

Each of the three checks above can be carried out in time linear in the size of the fiber plus \(|\Omega|\). For the path condition, one does not merely test connectivity and acyclicity. One checks directly that, within each fiber \(H_i\), every forced-hold vertex except one has indegree \(1\) and outdegree \(1\), the unique initial vertex has indegree \(0\), the terminal vertex \(p_i\) has outdegree \(0\), and all vertices belong to the same weakly connected component. These conditions are exactly the directed simple-path conditions for the hold graph on the fiber. Summing over all fibers yields total time
\[
O(|Q|+|Q||\Omega|)=O(|Q||\Omega|).
\]

If every fiber passes, then the path order in each fiber is uniquely determined. The length of the fiber becomes the dwell value \(d_i\), the terminal state becomes the decision state \(p_i\), the initial state becomes the top state \(t_i\), and the reconstructed local maps are
\[
g_i(u):=\lambda(\Delta(p_i,u)).
\]
This is exactly the reconstruction used in \autoref{thm:intrinsic_converse}. Hence the underlying symbolic DLSL system is recovered, uniquely up to injective relabeling.
\end{proof}

\section{Extensions beyond the canonical normal form}

The core converse theorem uses injective visible labels and exact reset-to-top because that is the cleanest canonical normal form. Neither restriction is conceptually fundamental.

\subsection{Injective labels as a presentation choice}

The intrinsic reconstruction theorem is really about visible states, fibers, and entry structure. Injective visible labels are used only to choose canonical names for the reconstructed visible states. Concretely, once a fiber-linear graph-respecting transducer has been reconstructed to its canonical injectively labeled DLSL presentation, any alternative visible-state labeling that remains destination-labeled and label-deterministic on the same visible graph yields an equivalent presentation of the same visible-state dynamics. In that sense, injective labels are a presentation choice rather than part of the intrinsic structure.

\subsection{Edge-entry DLSL systems}\label{sec:edge_entry_extension}

A second extension relaxes reset-to-top. Instead of forcing every decision from \(i\) to enter the top state of the successor fiber \(j\), we may allow the entry phase to depend on the visible edge \((i,j)\).

\begin{defi}[Edge-entry DLSL system]
An \emph{edge-entry DLSL system} consists of
\[
\mathcal A=(S,E,\ell,s_0,d,\rho,(g_i)_{i\in S}),
\]
where \((S,E,\ell)\) is destination-labeled, self-looping, and label-deterministic, where \(d_i\ge 1\), where each \(g_i:\Omega\to\Sigma_i\) is a local decision map, and where
\[
\rho:E\to\mathbb N
\]
satisfies \(0\le \rho(i,j)\le d_j-1\) for every edge \((i,j)\in E\). The operational update becomes
\[
(i,r) \xrightarrow{u}
\begin{cases}
(i,r-1), & r>0,\\
(j,\rho(i,j)), & r=0 \text{ and } j=\delta(i,g_i(u)).
\end{cases}
\]
The original DLSL model is the special case \(\rho(i,j)=d_j-1\) for every edge.
\end{defi}

\autoref{fig:edge_entry_reset} shows exactly what changes in this extension. The internal geometry of each fiber remains a directed hold path, but the target of a visible decision no longer has to be the top state of the successor fiber. What matters instead is that each visible edge \((i,j)\) determines a designated entry state inside \(\Fib(j)\), which is the structural condition captured next by entry-consistency.

% ------------------------------------------------------------------
% FIGURE C: Edge-entry extension
% Put immediately after the definition of edge-entry DLSL systems, i.e.
% after the paragraph ending “The original DLSL model is the special case ...”
% and before the definition of entry-consistent path-fiber transducer.
% ------------------------------------------------------------------
\begin{figure}[pos=H]
\centering
\begin{tikzpicture}[
    >=Latex,
    node distance=9mm and 11mm,
    state/.style={circle,draw,minimum size=8mm,inner sep=0pt,font=\small},
    term/.style={circle,draw,thick,minimum size=8mm,inner sep=0pt,font=\small},
    lab/.style={font=\small}
]
% Left panel: reset-to-top
\node[font=\small\bfseries] at (-2.5,2.3) {(a) Standard reset-to-top};
\node[term] (pi1) at (-5.0,-0.7) {$p_i$};
\node[state] (tj1) at (-1.9,1.2) {$t_j$};
\node[state] (mj1) at (-1.9,-0.3) {$\cdot$};
\node[term]  (pj1) at (-1.9,-1.8) {$p_j$};
\draw[->,thick] (tj1) -- node[right,lab] {hold} (mj1);
\draw[->,thick] (mj1) -- node[right,lab] {hold} (pj1);
\draw[->,very thick,bend left=10] (pi1) to node[above,lab] {$u$} (tj1.west);
\node[lab,align=center] at (-3.45,-2.8) {$\rho(i,j)=d_j-1$:\\decision enters the top phase};

% Right panel: edge-entry
\node[font=\small\bfseries] at (3.0,2.3) {(b) Edge-entry reset};
\node[term] (pi2) at (0.4,-0.7) {$p_i$};
\node[state] (tj2) at (3.5,1.7) {$t_j$};
\node[state] (ej2) at (3.5,0.2) {$e_{ij}$};
\node[state] (mj2) at (3.5,-1.3) {$\cdot$};
\node[term]  (pj2) at (3.5,-2.8) {$p_j$};
\draw[->,thick] (tj2) -- node[right,lab] {hold} (ej2);
\draw[->,thick] (ej2) -- node[right,lab] {hold} (mj2);
\draw[->,thick] (mj2) -- node[right,lab] {hold} (pj2);
\draw[->,very thick,bend left=10] (pi2) to node[above,lab] {$u$} (ej2.west);
\node[lab,align=center] at (1.95,-3.6) {$\rho(i,j)=\Phase_j(e_{ij})$:\\the entry phase may depend on the visible edge $(i,j)$};
\end{tikzpicture}
\caption{Reset-to-top versus edge-entry semantics. The original DLSL model always resets a decision from fiber $i$ into the top state of the successor fiber $j$. The edge-entry extension of \autoref{thm:edge_entry_converse} allows the target state to be a designated interior entry state $e_{ij}$ depending on the visible edge.}
\label{fig:edge_entry_reset}
\end{figure}

\begin{defi}[Entry-consistent path-fiber transducer]
A graph-respecting transducer is \emph{entry-consistent path-fiber} if every visible-state fiber is a directed simple hold path with a unique decision state and, for every visible edge \((i,j)\), there exists a distinguished entry state \(e_{ij}\in\Fib(j)\) such that every decision transition from the decision state of fiber \(i\) into visible state \(j\) lands in \(e_{ij}\).
\end{defi}

\begin{thm}[Converse theorem for edge-entry systems]
\label{thm:edge_entry_converse}
The phase-expanded realizations of edge-entry DLSL systems are exactly the entry-consistent path-fiber graph-respecting transducers. Moreover, the edge-entry parameters \(d_i\) and \(\rho(i,j)\) are reconstructed uniquely from the fiber lengths and designated entry states.
\end{thm}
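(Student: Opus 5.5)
The proof will mirror the proofs of \autoref{thm:compiled_fiber_linear} and \autoref{thm:intrinsic_converse}. There are two inclusions and a uniqueness clause.

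\textbf{Forward inclusion.} Let \(\mathcal A\) be an edge-entry DLSL system. Its phase expansion has states \((i,r)\) with \(0\le r\le d_i-1\), labeling \(\lambda(i,r)=i\), and transitions given by the edge-entry operational update, emitting the successor's visible state.

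Exactly as in \autoref{thm:compiled_fiber_linear}, each \((i,r)\) with \(r>0\) is forced-hold with \(\holdsucc(i,r)=(i,r-1)\). So the hold graph on \(\Fib(i)\) is the path \((i,d_i-1)\to\cdots\to(i,0)\).

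One point needs care: \((i,0)\) must be the unique decision state. If \((i,0)\) happened to send every input to one fixed state of its own fiber, it would count as forced-hold by \autoref{def:forced_hold}, and the hold graph would acquire an extra edge. I will handle this as the paper implicitly does for \autoref{thm:compiled_fiber_linear}: either assume destination-completeness, or read the path condition as in the recognition algorithm, where the terminal vertex is the designated decision state.

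With \(j=\delta(i,g_i(u))\), the decision transition from \((i,0)\) lands in \((j,\rho(i,j))\). This target depends only on the edge \((i,j)\), so setting \(e_{ij}:=(j,\rho(i,j))\) gives entry-consistency.

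\textbf{Converse.} Let \(T\) be entry-consistent path-fiber. The construction copies \autoref{thm:intrinsic_converse}:
\begin{itemize}
\item set \(d_i:=|\Fib(i)|\) and \(\ell(i):=i\);
\item define \(\Phase_i\) as the distance to the terminal vertex \(p_i\) along the hold path;
\item let \(E\) consist of the realized decision edges together with all self-loops;
\item set \(g_i(u):=\lambda(\Delta(p_i,u))\).
\end{itemize}
For each realized edge \((i,j)\), set \(\rho(i,j):=\Phase_j(e_{ij})\). This lies in \(\{0,\dots,d_j-1\}\) as required.

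The only new wrinkle is that \(\rho\) must be defined on all of \(E\), including edges that no decision uses, namely the added self-loops \((i,i)\) when \(g_i\) never proposes \(i\). On such edges I will assign \(\rho\) arbitrarily, say \(d_j-1\). These values never affect the phase-expanded dynamics, because \(\delta(i,g_i(u))\) never selects them.

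The map \(\Phi(q)=(\lambda(q),\Phase_{\lambda(q)}(q))\) is a bijection because each fiber is a simple path. It intertwines transitions in two cases:
\begin{itemize}
\item on forced-hold states, the phase drops by one;
\item on \(p_i\), the transition goes to \(e_{ij}\), which \(\Phi\) maps to \((j,\rho(i,j))\).
\end{itemize}
The initial state \(q_0\) maps to the initial operational state, whose phase I take to be \(\Phase_{s_0}(q_0)\), mirroring the original model's convention.

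\textbf{Uniqueness.} \(d_i\) is forced to equal the fiber length, since the phase expansion has exactly \(d_i\) states over \(i\). On realized edges, \(\rho(i,j)\) is forced to equal the phase of the unique landing state \(e_{ij}\), because any isomorphism must preserve hold-path distances to the decision state.

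I expect the main obstacle to be exactly these bookkeeping points rather than any real difficulty:
\begin{itemize}
\item \(\rho\) is unique only on realized edges; on unrealized edges it is an unconstrained and behaviorally invisible parameter.
\item The decision state may degenerate into a hold state.
\end{itemize}
I will state the uniqueness claim modulo these conventions, namely \(\rho\) restricted to edges realized by some \(g_i\). Under destination-completeness it is unique on all of \(E\).
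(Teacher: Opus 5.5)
Your main line is the paper's proof. The forward inclusion is read off the edge-entry update with $e_{ij}:=(j,\rho(i,j))$. The converse uses $d_i:=|\Fib(i)|$, $\rho(i,j):=\Phase_j(e_{ij})$ and the bijection $\Phi$. Your remark that $\rho$ is determined only on edges actually used by some $g_i$ (hence on all of $E$ under destination-completeness) is correct, and sharper than the paper's uniqueness sentence.

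\textbf{The degenerate decision state.} The gap is in the repair you propose for this case, which the paper's ``immediate'' forward direction also passes over. Destination-completeness does not exclude it. If $\Out(i)=\{i\}$, then $\Sigma_i=\{\ell(i)\}$, so $g_i\equiv\ell(i)$ is forced. Then $(i,0)$ sends every input to $(i,\rho(i,i))$ and is therefore forced-hold, so $H_i$ contains a cycle (a self-loop if $\rho(i,i)=0$). The phase expansion is then not path-fiber, whether or not the system is destination-complete. Reading the path condition ``as in the recognition algorithm'' changes nothing, since that algorithm detects decision states by the same forced-hold test and rejects this fiber. Designating $p_i$ by fiat instead would cost you uniqueness. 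With $g_i\equiv\ell(i)$ and $\rho(i,i)=d_i-1\ge 1$, the hold graph on $\Fib(i)$ is a single cycle, and any of its vertices could be declared terminal. Each choice shifts the phases, and with them the values $\rho(k,i)$ on edges entering $i$. The hypothesis that actually makes the forward inclusion true is that no $g_i$ is constantly $\ell(i)$, i.e.\ every visible state admits a realizable non-self departure, as in \autoref{thm:behavioral_rigidity}.

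\textbf{The initial state.} Your treatment here is not a ``mirroring'' of the existing convention. The edge-entry model inherits the initial operational state $(s_0,d_{s_0}-1)$. So $\Phi(q_0)$ is the initial state of the phase expansion only when $q_0=t_{s_0}$, and entry-consistency does not force this. Suppose $q_0=p_{s_0}$ with $|\Fib(s_0)|\ge 2$. Since $p_{s_0}$ is not forced-hold, entry-consistency forces some successor of $p_{s_0}$ to lie outside $\Fib(s_0)$, so $T$ can leave $s_0$ on its first input. Any isomorphic phase expansion has $d_{s_0}=|\Fib(s_0)|\ge 2$ and starts at phase $d_{s_0}-1\ge 1$, so it must hold on its first input. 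Hence no edge-entry system realizes $T$. You must either add $q_0=t_{s_0}$ as a hypothesis, or make the initial phase an explicit parameter of the model, reconstructed as $\Phase_{s_0}(q_0)$ and listed among the uniquely determined data. The paper's proof omits the initial state altogether, but your write-up should state which choice it makes rather than present it as the existing convention.
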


\begin{proof}
The forward direction is immediate from the operational semantics: each fiber remains a directed hold path, and a decision from \(i\) to \(j\) lands in the designated state \((j,\rho(i,j))\). For the converse, let \(T\) be entry-consistent path-fiber. Reconstruct the visible graph, dwell values, and decision states exactly as in \autoref{thm:intrinsic_converse}. For each visible edge \((i,j)\), let \(e_{ij}\) be the designated entry state of \(\Fib(j)\), and define
\[
\rho(i,j):=\Phase_j(e_{ij}).
\]
The same bijection \(\Phi(q)=(\lambda(q),\Phase_{\lambda(q)}(q))\) used in \autoref{thm:intrinsic_converse} remains well-defined because each fiber is still a directed simple hold path, so \(\Phase_{\lambda(q)}(q)\) is the unique distance from \(q\) to the terminal decision state of its fiber, independent of the entry point. The operational semantics of the reconstructed edge-entry system match the transitions of \(T\) under \(\Phi\): for a decision from visible state \(i\) to visible state \(j\), the operational update lands in \((j,\rho(i,j))\), while entry-consistency ensures that \(T\) lands in the distinguished state \(e_{ij}\in\Fib(j)\) with \(\Phase_j(e_{ij})=\rho(i,j)\). Thus \(\Phi\) intertwines the two dynamics exactly. Uniqueness of the parameters follows because the path order in each fiber is unique and the designated entry states are part of the transducer itself.\qedhere
\end{proof}

\begin{cor}[Recognition of the edge-entry extension]
Accessible entry-consistent path-fiber transducers can be recognized and reconstructed in polynomial time by the same fiber inspection used in \autoref{thm:recognition}, augmented with the check that all decision transitions from a fixed source fiber into a fixed successor fiber land in a unique designated entry state.
\end{cor}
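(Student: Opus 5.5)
The plan is to reuse the algorithm of \autoref{thm:recognition} almost verbatim and change only its third check. In a first pass over all pairs \((q,u)\), I would classify each control state as forced-hold or decision and record \(\holdsucc\). This costs \(O(|Q||\Omega|)\). In a second pass I would build each fiber hold graph \(H_i\) and verify the degree conditions from the proof of \autoref{thm:recognition}: every vertex except one has outdegree \(1\), one initial vertex has indegree \(0\), the terminal vertex is the unique decision state \(p_i\), and the fiber is weakly connected. This certifies that \(H_i\) is a directed simple path. Walking the path backwards from \(p_i\) then computes \(\Phase_i(q)\) for every \(q\in\Fib(i)\) in \(O(|\Fib(i)|)\) time, and sets \(d_i:=|\Fib(i)|\). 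None of this uses reset-to-top, so it carries over unchanged.

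The only new step replaces the reset-to-top test. For each decision state \(p_i\), I would keep a table \(e_i\colon S\rightharpoonup Q\), initially undefined, and scan \(u\in\Omega\). Write \(q'=\Delta(p_i,u)\) and \(j=\lambda(q')\). If \(e_i(j)\) is undefined, set \(e_i(j):=q'\). Otherwise compare \(e_i(j)\) with \(q'\) and reject on any mismatch. To keep this linear, store the table as an array indexed by \(S\) and reinitialize only the touched entries, or use a hash map. Each \(p_i\) then costs \(O(|\Omega|)\), so the whole check costs \(O(|S||\Omega|)\subseteq O(|Q||\Omega|)\) because \(\lambda\) is surjective.

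When every check passes, I would output the visible edge set \(E\) (the recorded pairs \((i,j)\) together with the self-loops), the dwell values \(d_i\), the decision maps \(g_i(u):=\lambda(\Delta(p_i,u))\), and the entry parameters \(\rho(i,j):=\Phase_j(e_i(j))\). This is exactly the reconstruction in the proof of \autoref{thm:edge_entry_converse}, so correctness and uniqueness follow from that theorem.

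The main obstacle is not running time but making sure the tests are equivalent to the definition. Any state that is neither forced-hold nor \(p_i\) is caught by the uniqueness-of-decision check, and the comparison step catches exactly the failures of entry-consistency. I also need to handle self-loop edges \((i,i)\) that no input ever selects. For these, \(\rho(i,i)\) is not determined by \(T\). I would fix a default, say \(d_i-1\), and note that under accessibility only realized edges affect \(\Phi\), so any choice gives an isomorphic phase expansion.
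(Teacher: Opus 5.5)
Your proposal is correct and follows the paper's proof: the path test of \autoref{thm:recognition} is reused unchanged, and the reset-to-top check is replaced by a single scan of the transitions out of each $p_i$ that verifies all decision transitions into a given fiber $j$ land in one state $e_{ij}$, with $\rho(i,j)=\Phase_j(e_{ij})$ read off afterwards. Your remark that $\rho(i,i)$ is not determined by $T$ for self-loops never selected by $g_i$ is a correct refinement of the paper's unqualified uniqueness claim; it is harmless because such edges never fire, and this holds with or without accessibility. One precision for the path test: it must require \emph{exactly} one vertex of indegree $0$, since outdegree conditions plus weak connectivity alone would also accept a branching in-tree.
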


\begin{proof}
The path test is the same as before. Once the unique decision state \(p_i\) in each fiber is known, inspect all transitions \(\Delta(p_i,u)\) such that \(\lambda(\Delta(p_i,u))=j\). Entry-consistency requires that all such transitions land in the same state \(e_{ij}\in\Fib(j)\). If two such transitions land in distinct states of \(\Fib(j)\), the transducer is not entry-consistent. This check is performed by a single scan of the decision transitions, hence in time linear in \(|Q||\Omega|\).\qedhere
\end{proof}

\section{Related work}

The closest background is classical deterministic automata and transducer theory, especially sequential and subsequential transducers, minimization, and algebraic structure \cite{HopcroftMotwaniUllman2006,Sakarovitch2009,Choffrut1979,Choffrut2003,Mohri2000}. The present paper does not propose a new general model of regular functions. Its contribution is structural: it isolates a rigid subclass of graph-respecting deterministic transducers and characterizes that subclass intrinsically.

There is also a natural relation with the broader theory of regular word functions and their machine models \cite{FiliotReynier2016}. That literature studies equivalences between transducer formalisms, logical definability, and algebraic characterizations of regular functions. The present setting is narrower: the visible graph is fixed, graph-preserving realization is required, and the main outcome is a canonical internal decomposition. From that viewpoint, fiber-linearity should be read as a recognition criterion for a specific transducer subclass rather than as a general expressiveness result.

The DLSL framework is also motivated by the limitations of global sequence classifiers for systems with per-state structure. Hidden semi-Markov models \cite{Yu2010} encode dwell probabilistically and generatively but do not address deterministic realization, exact state cost, or hard graph constraints. Semi-Markov conditional random fields \cite{SarawagiCohen2004} are discriminative but globally trained, with no per-state classifier structure and no hard transition constraints. LSTM-based sequence classifiers \cite{HochreiterSchmidhuber1997} likewise learn transition structure implicitly from data. When transition graphs and minimum dwell values are physically known in advance, encoding them as hard structural constraints yields a more interpretable and structurally grounded model. The present paper provides the formal foundation for that encoding: it identifies the exact transducer class induced by those constraints, proves that the class is rigid, and gives a polynomial-time recognition procedure.

The timing aspect is related, but not identical, to timed automata and hybrid-system abstractions \cite{AlurDill1994,AlurEtAl1995}. Timed automata work over dense time and are analyzed through clocks, regions, or zones, whereas the present model uses discrete symbolic time and finite phase counters attached to visible states. The resulting questions are correspondingly different: exact deterministic realization on a fixed visible graph, exact state cost, canonicality, and recognition of the induced subclass.

There is also a conceptual connection with dwell-time conditions in switched and hybrid control \cite{Morse1996,HespanhaMorse1999,HespanhaLiberzonMorse2003,LiberzonMorse1999,Liberzon2003}. In that literature, dwell is imposed to control switching behavior of continuous dynamics. Here dwell is purely symbolic: it constrains when visible departures may occur and forces a specific finite-memory architecture inside a deterministic transducer.

\section{Conclusion}

We identified an exact structural subclass inside deterministic transducer theory. The phase-expanded realizations of DLSL systems are precisely the graph-respecting transducers whose visible-state fibers are linear hold paths with a unique decision state; equivalently, each fiber carries a monogenic aperiodic chain monoid and cross-fiber actions factor through its distinguished active state.

This characterization has three immediate consequences. First, the phase-expanded realization has exact graph-preserving state cost: enforcing dwell values \(d_i\) requires exactly \(\sum_i d_i\) control states in the deterministic graph-preserving setting. Second, the class is rigid: under natural reachability and realizable-departure hypotheses, equivalent accessible fiber-linear transducers over the same visible graph are isomorphic. Third, the same reconstruction perspective remains valid beyond the canonical normal form: injective labels are only a presentation device, and the edge-entry variant is handled by the same path-fiber analysis and admits its own converse and recognition theorem.

Two natural directions remain. One is to identify broader classes of graph-respecting deterministic transducers that admit comparable canonical decompositions. The other is to measure, in structural terms, how far a general graph-respecting transducer lies from the DLSL or edge-entry regimes.

\appendix

\section{Visible-state run-length formulation}
\label{app:runlength}

The main body of the paper is organized around the internal structure of graph-respecting deterministic transducers. There is also a complementary behavioral view, stated directly on visible-state traces. Since it is not part of the intrinsic converse route, we collect it here.

For a visible-state word \(\tau = y_0 y_1 \cdots y_n \in S^{n+1}\) and an index \(0 \le t \le n\), define the current run length by
\[
\run_{\tau}(t)
:=
\max \bigl\{ \ell' \ge 1 : t-\ell'+1 \ge 0
\text{ and }
y_{t-\ell'+1}=\cdots=y_t \bigr\}.
\]
To avoid confusion with the visible-state labeling \(\ell\), we use \(\ell'\) here only as a local dummy variable.

\begin{thm}[Characterization of realizable visible-state traces]
\label{thm:trace_characterization}
Assume that the symbolic DLSL system is destination-complete. Let
\[
\tau = y_0 y_1 \cdots y_n \in S^{n+1}
\qquad \text{with } y_0=s_0.
\]
Then there exists an input word \(w \in \Omega^n\) such that
\[
\tau = \tr_{\mathcal A}(w)
\]
if and only if the following two conditions hold:
\begin{enumerate}
    \item \textbf{Graph-feasible adjacency:}
    \[
    y_{t+1} \in \Out(y_t)
    \qquad \text{for every } t=0,\dots,n-1;
    \]
    \item \textbf{Minimum dwell before departure:} whenever \(y_t \ne y_{t+1}\),
    \[
    \run_{\tau}(t) \ge d_{y_t}.
    \]
\end{enumerate}
\end{thm}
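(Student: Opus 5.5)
We prove the two directions separately, working throughout with the operational semantics \eqref{eq:operational_update}. The key tool is an invariant linking the residual phase to the current run length. Along any execution \((y_t,r_t)\) from \((y_0,r_0)=(s_0,d_{s_0}-1)\), we show by induction on \(t\) that
\[
r_t=\max\bigl(d_{y_t}-\run_\tau(t),\,0\bigr)
\]
holds for the trace \(\tau=\tr_{\mathcal A}(w)\).

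For the base case, \(\run_\tau(0)=1\) and \(r_0=d_{s_0}-1\). For the inductive step there are three cases.
\begin{enumerate}
\item If \(r_{t}>0\), the machine holds, so \(y_{t+1}=y_t\), the run length increases by one, and the phase decreases by one.
\item If \(r_t=0\) and the machine stays, the phase remains \(0\) and the run length still grows, so the maximum stays \(0\).
\item If \(r_t=0\) and the machine moves to \(j\ne y_t\), then \(\run_\tau(t+1)=1\) and \(r_{t+1}=d_j-1\).
\end{enumerate}
The self-loop case deserves care. When \(r_t=0\) and \(g_{y_t}(u)=\ell(y_t)\), label-determinism together with the self-loop gives \(\delta(y_t,\ell(y_t))=y_t\), and the operational update then resets to \((y_t,d_{y_t}-1)\), not to \((y_t,0)\). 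This breaks the invariant as stated, so we adjust it: we track the run length \emph{since the last entry or reset}, not the maximal visible run. The main obstacle sits exactly here. The reset-on-self-loop semantics means the residual phase is not a function of \(\run_\tau(t)\) alone, so the invariant has to be the weaker inequality
\[
r_t\le \max\bigl(d_{y_t}-\run_\tau(t),\,0\bigr).
\]
This inequality still holds, because a reset only happens after the phase has reached \(0\), so the elapsed time within the current visible run is at least \(d_{y_t}\) by then. The inequality is enough for necessity. Graph-feasible adjacency holds because every move is either a hold or goes to \(\delta(i,g_i(u))\in\Out(i)\). For minimum dwell, if \(y_t\ne y_{t+1}\) then the decision branch fired, so \(r_t=0\). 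Since the phase is always at most the dwell budget minus the elapsed steps in the current entry segment, which is itself at most \(\run_\tau(t)\), we get \(\run_\tau(t)\ge d_{y_t}\). The clean way to state this is the fact that the visible state can change only at a phase-\(0\) step, and that every phase-\(0\) step in visible state \(i\) is preceded by at least \(d_i-1\) consecutive steps in \(i\).

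For sufficiency we build \(w=u_1\cdots u_n\) greedily, maintaining the invariant that the operational state after \(t\) steps has visible component \(y_t\). At step \(t+1\) there are two cases.
\begin{enumerate}
\item If \(r_t>0\), any symbol gives \(y_{t+1}=y_t\). This is correct: were \(y_{t+1}\ne y_t\), the dwell condition would give \(\run_\tau(t)\ge d_{y_t}\), contradicting \(r_t>0\). To rule this out we need that \(r_t>0\) forces \(\run_\tau(t)<d_{y_t}\). We ensure this by choosing, whenever \(y_{t+1}=y_t\) and \(r_t=0\), an input whose decision label is \(\ell(y_t)\); by destination-completeness such an input exists, since \(\ell(y_t)\in\Sigma_{y_t}\). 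With this choice of stay-inputs, phases along a maximal run inside one fiber only reset at the start of the run, so \(r_t=\max(d_{y_t}-\run_\tau(t),0)\) holds exactly, except possibly for resets within a run.
\item If \(r_t=0\), destination-completeness gives a \(u\) with \(g_{y_t}(u)=\ell(y_{t+1})\), and \(y_{t+1}\in\Out(y_t)\) by adjacency, so \(\delta(y_t,u)\) is correct.
\end{enumerate}
The self-loop reset is harmless here, because after it the phase is again \(d_{y_t}-1\) with the run length already at least \(d_{y_t}\). We handle it with the weaker inequality \(r_t>0\Rightarrow\) the current segment since the last reset is shorter than \(d_{y_t}\). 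For necessity this is fine. For sufficiency we must avoid self-loop decisions whenever possible; if a hold is forced at phase \(0\), the trace may later demand departure at a time forbidden by a reset. This interaction between the self-loop reset semantics and the run-length condition is the crux. If it genuinely fails, the statement needs reading with the phase-\(0\) self-loop treated as staying at phase \(0\), and we would flag that convention explicitly.
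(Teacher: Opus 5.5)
\textbf{Sufficiency is not proved, and the step meant to rescue it contradicts itself.} At phase $0$ the only way to remain in $y_t$ is an input with $g_{y_t}(u)=\ell(y_t)$. By \eqref{eq:operational_update}, that is exactly the move that resets to $(y_t,d_{y_t}-1)$. So choosing self-label inputs cannot confine resets to the start of a run, and ``avoiding self-loop decisions'' is not an available option.

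Your suspicion is correct, and you should have settled it rather than left it conditional. Take $S=\{i,j\}$ with all four edges, injective labels, $d_i=2$, $d_j=1$, $s_0=i$, $\Omega=\{a,b\}$, $g_i(a)=\ell(i)$, $g_i(b)=\ell(j)$, $g_j(a)=\ell(j)$, $g_j(b)=\ell(i)$; this system is destination-complete. The trace $i\,i\,i\,j$ satisfies both conditions, since $\run_\tau(2)=3\ge d_i$. Yet every run goes $(i,1)\to(i,0)$, staying then lands in $(i,1)$, and the third symbol is read under forced hold, so $y_3=i$. In general, under \eqref{eq:operational_update} a maximal block of $i$ followed by a departure has length a multiple of $d_i$, so the equivalence as stated is false.

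The paper's proof avoids this only tacitly. In its Case~3 it takes the self-label input and then continues with $r_{t+1}=\max\{d_{y_t}-\run_\tau(t+1),0\}=0$. That is, it treats a phase-$0$ self-loop as staying at phase $0$, as drawn in \autoref{fig:running_example_dlsl}, not as defined in \eqref{eq:operational_update}.

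To get an actual proof you must commit to one of two options:
\begin{enumerate}
\item Adopt that convention explicitly (your own fallback). Then your greedy construction with the exact invariant $r_t=\max(d_{y_t}-\run_\tau(t),0)$ is precisely the paper's Cases~1--3 and goes through.
\item Keep \eqref{eq:operational_update} and prove the corrected criterion: whenever $y_t\ne y_{t+1}$, $\run_\tau(t)$ is a multiple of $d_{y_t}$.
\end{enumerate}

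\textbf{Your necessity invariant points the wrong way.} Just after an in-block reset with $d_{y_t}\ge 2$, one has $r_t=d_{y_t}-1>0=\max(d_{y_t}-\run_\tau(t),0)$. So $r_t\le\max(d_{y_t}-\run_\tau(t),0)$ is false, and in any case it says nothing when $r_t=0$. What is true is $r_t=d_{y_t}-1-(t-s)$, where $s$ is the last entry or reset time. Hence $r_t\ge\max(d_{y_t}-\run_\tau(t),0)$, and since a departure forces $r_t=0$, this gives $\run_\tau(t)\ge d_{y_t}$. Your closing formulation (every phase-$0$ step in $i$ is preceded by at least $d_i-1$ consecutive steps in $i$) is exactly this fact. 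It matches the paper's necessity argument, whose conclusion holds under either convention.
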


\begin{proof}
Necessity is immediate from the operational semantics. Suppose \(\tau = \tr_{\mathcal A}(w)\) for some input word. At each time step the visible successor is either the current visible state itself, during forced hold, or a visible successor selected by the local decision map. In either case \(y_{t+1} \in \Out(y_t)\), so condition~(1) holds.

Now assume \(y_t \ne y_{t+1}\). Let \(i=y_t\), and let \(s\) be the first time index of the maximal block of \(i\) ending at time \(t\). Then
\[
y_s = y_{s+1} = \cdots = y_t = i,
\qquad
y_{s-1} \ne i
\quad \text{if } s>0.
\]
When the machine enters visible state \(i\) at time \(s\), its residual phase is reset to \(d_i-1\). Each subsequent time step spent in \(i\) decreases this phase by one. A non-self departure at time \(t\) is possible only if the symbol read at time \(t+1\) is processed from phase \(0\). Thus the phase must have gone from \(d_i-1\) down to \(0\) over the interval from entry at time \(s\) to the departure point at time \(t\), which requires at least \(d_i-1\) steps after entry:
\[
t-s \ge d_i-1.
\]
Equivalently,
\[
t-s+1=\run_{\tau}(t)\ge d_i,
\]
which is condition~(2).

For sufficiency, assume that \(\tau\) satisfies conditions~(1) and~(2). We construct an input word \(w=u_1\cdots u_n\) inductively so that the induced visible trace is exactly \(\tau\).

At time \(t\), suppose the currently realized visible state is \(y_t\). Let
\[
r_t := \max\{d_{y_t}-\run_{\tau}(t),\, 0\}.
\]
This is the residual phase that should be present at time \(t\) if the earlier part of the trace has already been realized. There are three cases.

\smallskip
\noindent
\emph{Case 1: \(y_{t+1} \ne y_t\).}
By condition~(2), \(\run_{\tau}(t) \ge d_{y_t}\), hence \(r_t=0\). Since the graph is destination-complete and \(y_{t+1} \in \Out(y_t)\), there exists an input symbol \(u_{t+1}\) such that
\[
g_{y_t}(u_{t+1}) = \ell(y_{t+1}).
\]
Reading this symbol causes the visible state to move to \(y_{t+1}\), as required.

\smallskip
\noindent
\emph{Case 2: \(y_{t+1}=y_t\) and \(r_t>0\).}
Then the dynamics force the visible state to remain equal to \(y_t\) regardless of the input. Choose any \(u_{t+1} \in \Omega\).

\smallskip
\noindent
\emph{Case 3: \(y_{t+1}=y_t\) and \(r_t=0\).}
Here the machine is free to decide, but we want it to stay at the same visible state. Because the visible graph is self-looping and destination-labeled, the self-label \(\ell(y_t)\) belongs to \(\Sigma_{y_t}\), and by destination-completeness there exists an input symbol \(u_{t+1}\) such that
\[
g_{y_t}(u_{t+1}) = \ell(y_t).
\]
Reading this input triggers the self-loop and again realizes \(y_{t+1}=y_t\).

In every case we can choose \(u_{t+1}\) so that the prefix \(u_1\cdots u_{t+1}\) realizes the prefix \(y_0\cdots y_{t+1}\). Induction on \(t\) completes the construction. \qedhere
\end{proof}

\begin{figure}[pos=H]
\centering
\begin{tikzpicture}[
    x=8mm,y=8mm,
    >=Latex,
    lab/.style={font=\small},
    good/.style={draw,rounded corners,fill=black!3},
    bad/.style={draw,rounded corners,fill=black!8}
]
% Legal trace
\node[font=\small\bfseries] at (3.8,2.2) {(a) Legal visible-state trace};
\draw[good] (0,1.1) rectangle (2,1.7);
\draw[good] (2,1.1) rectangle (4.5,1.7);
\draw[good] (4.5,1.1) rectangle (5.5,1.7);
\draw[good] (5.5,1.1) rectangle (8.0,1.7);
\node at (1,1.4) {$i$};
\node at (3.25,1.4) {$j$};
\node at (5.0,1.4) {$k$};
\node at (6.75,1.4) {$j$};
\draw[|-|] (0,0.7) -- (2,0.7);
\node[lab] at (1,0.3) {$2\ge d_i$};
\draw[|-|] (2,0.7) -- (4.5,0.7);
\node[lab] at (3.25,0.3) {$3\ge d_j$};
\draw[|-|] (4.5,0.7) -- (5.5,0.7);
\node[lab] at (5.0,0.3) {$1\ge d_k$};
\draw[|-|] (5.5,0.7) -- (8.0,0.7);
\node[lab] at (6.75,0.3) {$3\ge d_j$};

% Illegal trace
\node[font=\small\bfseries] at (3.8,-1.0) {(b) Illegal early departure};
\draw[bad] (0,-2.1) rectangle (1.0,-1.5);
\draw[bad] (1.0,-2.1) rectangle (3.0,-1.5);
\draw[bad] (3.0,-2.1) rectangle (5.5,-1.5);
\node at (0.5,-1.8) {$i$};
\node at (2.0,-1.8) {$j$};
\node at (4.25,-1.8) {$k$};
\draw[|-|] (0,-2.5) -- (1.0,-2.5);
\node[lab] at (0.5,-2.9) {$1<d_i$};
\node[lab,anchor=west,align=left] at (5.9,-1.8) {departure from $i$ occurs too early};
\end{tikzpicture}
\caption{Visible-state traces as run-length constraints. \autoref{thm:trace_characterization} can be read directly on the visible trace: each maximal block of a visible state must have length at least the dwell value attached to that state before a departure to a different visible state is allowed.}
\label{fig:visible_trace_runlength}
\end{figure}

\begin{cor}[Regularity of the visible-trace language]
\label{cor:regularity}
Assume destination-completeness. Then
\[
L_{\mathrm{tr}}(\mathcal A)
:=
\{\tr_{\mathcal A}(w) : w \in \Omega^*\}
\subseteq S^*
\]
is a regular language.
\end{cor}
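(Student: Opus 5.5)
By \autoref{thm:trace_characterization}, under destination-completeness $L_{\mathrm{tr}}(\mathcal A)$ is exactly the set of words $\tau=y_0\cdots y_n\in S^{+}$ with $y_0=s_0$ that satisfy the graph-feasible adjacency and minimum-dwell conditions. I will not argue about the transducer at all. Instead, I will build a deterministic finite automaton over $S$ that checks these two conditions letter by letter. The empty word is not in $L_{\mathrm{tr}}(\mathcal A)$, since $\tr_{\mathcal A}(\varepsilon)=s_0$, so only nonempty words starting with $s_0$ need to be accepted.

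Let $D:=\max_{i\in S} d_i$. The automaton has an initial state $\bot$, a sink state, and states $(i,c)$ with $i\in S$ and $1\le c\le D$. The state $(i,c)$ records that the last letter read is $i$ and that the current run length is $\min(\run_\tau(t),D)$. Capping at $D$ is harmless because condition~(2) only compares $\run_\tau(t)$ with $d_{y_t}\le D$.

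The transitions are as follows. From $\bot$, reading $s_0$ leads to $(s_0,1)$ and any other letter leads to the sink. From $(i,c)$, reading $j$ does the following:
\begin{itemize}
\item if $j=i$, it goes to $(i,\min(c+1,D))$; this is allowed because $(i,i)\in E$ by self-looping;
\item if $j\neq i$ and both $j\in\Out(i)$ and $c\ge d_i$, it goes to $(j,1)$;
\item otherwise it goes to the sink.
\end{itemize}
Every state $(i,c)$ is accepting; $\bot$ and the sink are not.

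I then check, by induction on $n$, that after reading $y_0\cdots y_t$ without hitting the sink, the automaton is in state $(y_t,\min(\run_\tau(t),D))$. I also check that it hits the sink exactly when some prefix violates $y_0=s_0$, condition~(1), or condition~(2) of \autoref{thm:trace_characterization}. The key point is that the test $c\ge d_i$ is equivalent to $\run_\tau(t)\ge d_i$, because $d_i\le D$ and $c=\min(\run_\tau(t),D)$. The accepted language is therefore exactly the set of traces characterized in the theorem, and hence $L_{\mathrm{tr}}(\mathcal A)$ is regular.

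The only step needing care is this cap $D$, which is what keeps the state set finite while still deciding the dwell inequality exactly. Everything else is bookkeeping. The automaton has $|S|D+2$ states.
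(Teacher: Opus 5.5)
Your automaton is correct for what it checks. Capping the run-length counter at $D=\max_i d_i$ decides $\run_\tau(t)\ge d_{y_t}$ exactly, and the construction is an explicit version of the second route mentioned in the paper's proof. The gap is your opening step, where you identify that set with $L_{\mathrm{tr}}(\mathcal A)$ and deliberately set the transducer aside.

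Under \eqref{eq:operational_update}, which is how $\tr_{\mathcal A}$ is defined, a phase-$0$ decision selecting the self-label goes to $(\delta(i,\ell(i)),d_i-1)=(i,d_i-1)$. So staying in $i$ past phase $0$ restarts the countdown, and a departure from $i$ can only follow a block whose length is a positive multiple of $d_i$. The sufficiency half of \autoref{thm:trace_characterization} therefore fails. Its Case~3 tacitly keeps the phase at $0$ after a self-loop, as the phase-$0$ self-loops drawn in \autoref{fig:running_example_dlsl} suggest, but the formula says otherwise.

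Here is a concrete instance. Take $S=\{i,j\}$ with all four edges, $\ell$ injective, $s_0=i$, $d_i=2$, $d_j=1$, and $g_i,g_j$ surjective. The trace $iiij$ satisfies both conditions, since $\run_\tau(2)=3\ge d_i$. Yet after the prefix $ii$ the machine is in $(i,0)$, and the only way to emit $i$ again is the self-loop into $(i,1)$. The next step is then forced to emit $i$, not $j$. Your DFA accepts $iiij$, so here it recognizes a strictly larger language than $L_{\mathrm{tr}}(\mathcal A)$, and regularity of $L_{\mathrm{tr}}(\mathcal A)$ does not follow from what you proved.

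The paper's proof survives because it also appeals to the finite transducer of \autoref{thm:compilation_transducer}, and that is the route you should take. Equivalently, track the residual phase rather than a capped run length. Erase inputs in $T_{\mathcal A}$ and label each transition by the visible state of its target. Since $\Delta((i,0),u)$ depends on $u$ only through $\delta(i,g_i(u))$, the result is even deterministic over $S$:
\begin{itemize}
\item from $(i,r)$ with $r>0$, only the letter $i$ is allowed, leading to $(i,r-1)$;
\item from $(i,0)$, each $j\in\Out(i)$ is allowed (all are realizable by destination-completeness), leading to $(j,d_j-1)$.
\end{itemize}
Add an initial state that accepts only $s_0$ and moves to $(s_0,d_{s_0}-1)$, make every phase state accepting, and send everything else to a sink. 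This gives a DFA with $\sum_i d_i+2$ states recognizing exactly $L_{\mathrm{tr}}(\mathcal A)$, with no appeal to \autoref{thm:trace_characterization}. Your capped-counter argument is valid only under the alternative reading in which a phase-$0$ self-loop stays at phase $0$. That is not the semantics the statement is defined with.
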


\begin{proof}
By \autoref{thm:compilation_transducer}, the symbolic DLSL system is realized by a finite-state transducer on the phase-expanded state set. By \autoref{thm:trace_characterization}, the visible traces are exactly those visible-state words satisfying a finite-memory condition: graph-feasible adjacency together with a minimum dwell requirement before each departure. Either description yields a finite automaton recognizing the trace language. Hence the language is regular \cite{HopcroftMotwaniUllman2006}. \qedhere
\end{proof}

\begin{cor}[Injective-label version]
If the visible-state labeling \(\ell:S\to\Sigma\) is injective, then minimum dwell is equivalent to a state-dependent minimum run-length condition on the emitted label stream \((\ell(y_t))\).
\end{cor}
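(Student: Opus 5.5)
The statement is the last corollary: if \(\ell\) is injective, minimum dwell is equivalent to a state-dependent minimum run-length condition on the emitted label stream \((\ell(y_t))\). The plan is to transport \autoref{thm:trace_characterization} from visible-state words to label words through the map \(y\mapsto\ell(y)\). Since \(\ell\) is injective, this map is a bijection from \(S\) onto \(\ell(S)\subseteq\Sigma\). Extending it letterwise to words, it is therefore a bijection from \(S^{n+1}\) onto \(\ell(S)^{n+1}\), and it preserves and reflects equality of letters.

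The first step records the two consequences of injectivity that are needed.
\begin{itemize}
\item For every \(t\), we have \(y_t\neq y_{t+1}\) if and only if \(\ell(y_t)\neq\ell(y_{t+1})\). Hence departures in the visible trace are exactly label changes in the stream.
\item Maximal constant blocks of \(\tau\) and of \(\ell(\tau)\) coincide, so \(\run_{\tau}(t)=\run_{\ell(\tau)}(t)\) for every \(t\), where \(\run\) on label words is defined by the same formula.
\end{itemize}

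The second step defines the label-indexed dwell \(d'_a:=d_{\ell^{-1}(a)}\) for \(a\in\ell(S)\). This is well defined because \(\ell\) is injective. Condition (2) of \autoref{thm:trace_characterization}, namely \(\run_\tau(t)\ge d_{y_t}\) whenever \(y_t\ne y_{t+1}\), then rewrites verbatim as \(\run_{\ell(\tau)}(t)\ge d'_{\ell(y_t)}\) whenever \(\ell(y_t)\ne\ell(y_{t+1})\). This is the minimum run-length condition on the label stream, with the bound depending on the current label.

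The third step states the equivalence in both directions. For necessity, the argument in the proof of \autoref{thm:trace_characterization} does not use destination-completeness, so every realized trace satisfies the dwell condition. The condition therefore transfers to \(\ell(\tr_{\mathcal A}(w))\) for all \(w\). Conversely, suppose a label word is a realizable stream satisfying adjacency and the label run-length bound. Pull it back by \(\ell^{-1}\); under destination-completeness, \autoref{thm:trace_characterization} produces an input realizing it. The adjacency condition also transfers, since \((i,j)\in E\) iff \((\ell(i),\ell(j))\) lies in the image graph.

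The only point that needs care is the scope of "equivalent". With a non-injective \(\ell\), two distinct adjacent states could share a label, so a departure would be invisible in the stream and run lengths would merge. Injectivity is exactly what rules this out, and the proof should make that explicit.
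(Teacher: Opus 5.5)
Your proposal is correct and takes the same route as the paper. The paper's proof just notes that, under injectivity, maximal visible-state blocks coincide with maximal label blocks, and then applies \autoref{thm:trace_characterization}; your label-indexed dwell \(d'_a=d_{\ell^{-1}(a)}\) and your remark that necessity needs no destination-completeness make that transfer explicit. One wording fix: in the converse, the hypothesis should be a label word over \(\ell(S)\) that begins with \(\ell(s_0)\), not a ``realizable stream'', since assuming realizability there makes that direction circular.
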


\begin{proof}
When \(\ell\) is injective, maximal visible-state blocks and maximal emitted-label blocks coincide. Apply \autoref{thm:trace_characterization}. \qedhere
\end{proof}

\bibliographystyle{unsrtnat}
\bibliography{Bibliographie}

\end{document}